\newcommand{\setR}{\mathbbmss{R}}
\newcommand{\setS}{\mathbbmss{S}}
\newcommand{\setC}{\mathbbmss{C}}
\newcommand{\setU}{\mathbbmss{U}}
\newcommand{\setD}{\mathbbmss{D}}
\newcommand{\Ex}[2]{ \mathbbm{E}_{#2} \left\lbrace #1 \right\rbrace }
\newcommand{\rmg}{\mathrm{g}}
\newcommand{\her}{\mathsf{H}}
\newcommand{\argmin}{\mathop{\mathrm{argmin}}}
\newcommand{\argmax}{\mathop{\mathrm{argmax}}} 
\newcommand{\subto}{\mathop{\mathrm{subject \; to}\;}}
\newcommand{\maP}{\mathcal{P}}
\newcommand{\mae}{\mathcal{E}}
\newcommand{\man}{\mathcal{N}}
\newcommand{\mao}{\mathcal{O}}
\newcommand{\maA}{\mathcal{A}}
\newcommand{\mac}{\mathcal{C}}
\newcommand{\bxx}{\mathbf{x}}
\newcommand{\mh}{\mathbf{h}}
\newcommand{\bss}{\mathbf{s}}
\newcommand{\buu}{\mathbf{u}}
\newcommand{\bmm}{\mathbf{c}}
\newcommand{\bvv}{\mathbf{v}}
\newcommand{\byy}{\mathbf{y}}
\newcommand{\bgg}{\mathbf{g}}
\newcommand{\rms}{\mathrm{s}}
\newcommand{\bh}{{\mathbf{h}}}
\newcommand{\btheta}{{\boldsymbol{\theta}}}
\newcommand{\bxi}{{\boldsymbol{\xi}}}
\newcommand{\bmu}{\boldsymbol{\mu}}
\newcommand{\bphi}{\boldsymbol{\phi}}
\newcommand{\bpsi}{\boldsymbol{\psi}}
\newcommand{\bx}{{\boldsymbol{x}}}
\newcommand{\xx}{\mathrm{x}}
\newcommand{\set}[1]{\left\lbrace#1\right\rbrace}
\newcommand{\Diag}[1]{\mathrm{Diag}\left\lbrace #1 \right\rbrace}
\newcommand{\brc}[1]{\left( #1 \right) }
\newcommand{\dbc}[1]{\left[ #1 \right] }
\newcommand{\br}{{\mathbf{c}}}
\newcommand{\bt}{{\mathbf{t}}}
\newcommand{\baa}{{\mathbf{a}}}
\newcommand{\trp}{\mathsf{T}}
\newcommand{\mC}{\mathbf{C}}
\newcommand{\mR}{\mathbf{R}}
\newcommand{\mI}{\mathbf{I}}
\newcommand{\mG}{\mathbf{G}}
\newcommand{\mQ}{\mathbf{Q}}
\newcommand{\mU}{\mathbf{U}}
\newcommand{\mTheta}{\mathbf{\Theta}}
\newcommand{\mXi}{\mathbf{\Xi}}
\newcommand{\mY}{\mathbf{Y}}
\newcommand{\mV}{\mathbf{V}}
\newcommand{\mW}{{\mathbf{W}}}
\newcommand{\mSigma}{\boldsymbol{\Sigma}}
\newcommand{\mT}{\mathbf{T}}
\newcommand{\mH}{\mathbf{H}}
\newcommand{\itr}[1]{^{\left( #1 \right)} }
\newcommand{\norm}[1]{\lVert #1 \rVert}
\newcommand{\abs}[1]{\lvert #1 \rvert}
\newcommand{\tr}[1]{\mathrm{tr} \{ #1 \}}
\newtheoremstyle{mystyle}
{}
{}
{\it}
{}
{\bfseries}
{:}
{ }
{}
\theoremstyle{mystyle}
\newtheorem{proposition}{Proposition}
\newtheorem{remark}{Remark}
\newtheorem{example}{Example}[section]
\newcommand\algorithmicinput{\textbf{Input:}}
\newcommand\INPUT{\item[\algorithmicinput]}
\newcommand\algorithmicoutput{\textbf{Output:}}
\newcommand\OUTPUT{\item[\algorithmicoutput]}
\newcounter{bar}
\begin{document}

\begin{acronym}
	\acro{mimo}[MIMO]{multiple-input multiple-output}
	\acro{csi}[CSI]{channel state information}
	\acro{awgn}[AWGN]{additive white Gaussian noise}
	\acro{iid}[i.i.d.]{independent and identically distributed}
	\acro{uts}[UTs]{user terminals}
	\acro{ps}[PS]{parameter server}
	\acro{irs}[IRS]{intelligent reflecting surface}
	\acro{tas}[TAS]{transmit antenna selection}
	\acro{glse}[GLSE]{generalized least square error}
	\acro{rhs}[r.h.s.]{right hand side}
	\acro{lhs}[l.h.s.]{left hand side}
	\acro{wrt}[w.r.t.]{with respect to}
	\acro{rs}[RS]{replica symmetry}
	\acro{mac}[MAC]{multiple access channel}
	\acro{np}[NP]{non-deterministic polynomial-time}
	\acro{papr}[PAPR]{peak-to-average power ratio}
	\acro{rzf}[RZF]{regularized zero forcing}
	\acro{snr}[SNR]{signal-to-noise ratio}
	\acro{sinr}[SINR]{signal-to-interference-and-noise ratio}
	\acro{svd}[SVD]{singular value decomposition}
	\acro{mf}[MF]{matched filtering}
	\acro{gamp}[GAMP]{generalized AMP}
	\acro{amp}[AMP]{approximate message passing}
	\acro{vamp}[VAMP]{vector AMP}
	\acro{map}[MAP]{maximum-a-posterior}
	\acro{ml}[ML]{maximum likelihood}
	\acro{mse}[MSE]{mean squared error}
	\acro{mmse}[MMSE]{minimum mean squared error}
	\acro{ap}[AP]{average power}
	\acro{ldgm}[LDGM]{low density generator matrix}
	\acro{tdd}[TDD]{time division duplexing}
	\acro{rss}[RSS]{residual sum of squares}
	\acro{rls}[RLS]{regularized least-squares}
	\acro{ls}[LS]{least-squares}
	\acro{erp}[ERP]{encryption redundancy parameter}
	\acro{zf}[ZF]{zero forcing}
	\acro{ta}[TA]{transmit-array}
	\acro{ofdm}[OFDM]{orthogonal frequency division multiplexing}
	\acro{dc}[DC]{difference of convex}
	\acro{bcd}[BCD]{block coordinate descent}
	\acro{mm}[MM]{majorization-maximization}
	\acro{bs}[BS]{base-station}
	\acro{aircomp}[AirComp]{over-the-air-computation}
	\acro{ULA}[ULA]{uniform linear array}
	\acro{FL}[FL]{federated learning}
	\acro{otaFL}[OTA-FL]{over-the-air federated learning}
	\acro{los}[LoS]{line-of-sight}
	\acro{nlos}[NLoS]{non-line-of-sight}
	\acro{AoA}[AoA]{angle of arrival}
	\acro{CNN}[CNN]{convolutional neural network}
	\acro{sgd}[SGD]{stochastic gradient descent}
	\acro{aircomp}[AirComp]{over-the-air computation}
\end{acronym}

\title{Matching Pursuit Based Scheduling for Over-the-Air Federated Learning}

\author{
\IEEEauthorblockN{
Ali Bereyhi, \textit{Member IEEE}, 
Adela Vagollari, \textit{Student Member IEEE}, 
Saba Asaad, \textit{Member IEEE}, 
Ralf R. M\"uller, \textit{Senior Member IEEE}, 
Wolfgang Gerstacker, \textit{Senior Member IEEE}, and
H. Vincent Poor, \textit{Life Fellow IEEE}
\thanks{Ali Bereyhi, Adela Vagollari, Saba Asaad, Ralf R. M\"uller and Wolfgang Gerstacker are with the Institute for Digital Communications (IDC) at Friedrich-Alexander-Universität Erlangen-Nürnberg (FAU); email: \texttt{\{ali.bereyhi, adela.vagollari, saba.asaad, ralf.r.mueller, wolfgang.gerstacker\}@fau.de}. H. Vincent Poor is with the Department of Electrical and Computer Engineering at the Princeton University; email: \texttt{poor@princeton.edu}.}
}
}


\IEEEoverridecommandlockouts

\maketitle

\begin{abstract}
This paper develops a class of low-complexity device scheduling algorithms  for over-the-air federated learning via the method of matching pursuit. The proposed scheme tracks closely the \textit{close-to-optimal} performance achieved by difference-of-convex programming, and outperforms significantly the well-known benchmark algorithms based on convex relaxation. Compared to the state-of-the-art, the proposed scheme poses a drastically lower computational load on the system: For $K$ devices and $N$ antennas at the parameter server, the benchmark complexity scales with $ {\brc{N^2+K}^3 + N^6}$ while the complexity of the proposed scheme scales with ${K^p N^q}$ for some $0 < p,q \leq 2$. The efficiency of the proposed scheme is confirmed via numerical experiments on the CIFAR-10 dataset.
\end{abstract}


\IEEEpeerreviewmaketitle

\section{Introduction}
In the light of dramatically increasing numbers of mobile devices and data traffic in the Internet-of-Things era, the need for a paradigm-shift in wireless networks from traditional centralized cloud computing architectures to distributed ones is growing \cite{lyu2020threats,aledhari2020federated,liang2020think,yang2021federated,nguyen2021federated}. By performing data processing at the edge of networks, several shortcomings of cloud computing, such as long latency and network congestion, can be effectively addressed \cite{ananthanarayanan2017real, hung2018videoedge,zhou2019edge}. Notably, edge computing is an appealing technology to perform real-time tasks and make real-time decisions by exploiting the abundant computational resources of the edge servers \cite{shi2016edge,shi2016promise,abbas2017mobile}. Nevertheless, the bandwidth limitations and resource constraints of the wireless channels can pose significant challenges to realizing fast learning \cite{jordan2018communication, chen2018lag,elgabli2020gadmm}. 
One way of overcoming these challenges is to integrate the edge-intelligent network within wireless networks and leverage the superposition property of wireless multiple-access channels \cite{liu2020over}. 

Recently, a new paradigm of distributed machine learning, referred to as \textit{\ac{FL}} has been introduced, in which distributed devices jointly train a shared global machine learning model without sharing their raw data explicitly \cite{li2020federated,konevcny2016federated,bonawitz2019towards}. In essence, \ac{FL} is a collaborative machine learning framework that enables distributed model training from decentralized data under coordination of a \ac{ps} \cite{konevcny2016federated}.  In principle \ac{FL} is performed over a decentralized network as follows: \begin{enumerate}
	\item A \ac{ps} first shares a global model with participating devices in the network. 
	\item Each device performs local model training using its own local dataset to determine the model parameters. It then transmits its trained model parameters to the \ac{ps} while keeping its private data locally within its own device. 
	\item Once the \ac{ps} aggregates the locally trained models, it updates the global model parameters using the aggregated models and broadcasts updated parameters to all edge devices.
\end{enumerate} 
These steps are alternated until the global model parameters converge \cite{bonawitz2019towards,li2020federated,yang2019federated}. Further illustrations can be found through the comprehensive example of \ac{FL} given in Appendix~\ref{App:1}.

Compared to the extreme cases of \textit{centralized} and \textit{individual} learning, \ac{FL} provides a tractable approach to handle a joint learning task over a distributed network. Nevertheless, this tractability comes with some costs which can be roughly categorized into three major forms:
\begin{enumerate}
	\item The \textit{statistical inference} problem in \ac{FL} is more challenging. This follows from the fact that the local datasets in the decentralized setting are not \ac{iid}. Some related discussions in this respect can be found in \cite{sattler2019robust,shoham2019overcoming,lee2020bayesian,luo2021no,gafni2022federated} and the references therein.
	\item Although local datasets are not shared in \ac{FL}, this approach is still vulnerable against \textit{adversarial} attacks in the network. In fact, due to the multiple rounds of model updates over the network, malicious terminals have access to a large list of local models from which they can learn the local datasets via model inversion attacks; see \cite{fredrikson2015model} for an instance of such attacks, and also discussions in \cite{wei2020secFed,kim2021federated} and references therein regarding the privacy challenges in \ac{FL}.
	\item To achieve convergence of the global model, an \ac{FL} algorithm needs to iterate for a rather large number of rounds. This leads to a \textit{high communication cost} in the network which can result in excessive  overhead to the system or even congest it; see for instance \cite{shi2020communication} and the references therein.
\end{enumerate}

These challenges are often addressed separately in the literature. This follows from the fact that they result from different sources of imperfection. In this work, we mainly focus on the third class of challenges, i.e., communication-related challenges, as this class is the key hindrance for \ac{FL} in wireless networks.

\subsection{Over-the-Air Federated Learning}
As mentioned, the key practical challenges in \ac{FL} over wireless networks, often called \textit{\ac{otaFL}}, are the communication-related ones. This comes from the facts that wireless networks are significantly restricted in resources and the communication links are severely impacted by fading. As a result, direct implementation of \ac{FL} in these networks can lead to large communication overhead in the system and high latency; see \cite{yang2020scheduling} and the references therein for further discussion of these issues. Consequently, developing communication-efficient strategies to mitigate communication cost and guarantee satisfactory learning performance is of paramount importance \cite{chen2021communication}. The proposed solutions in the literature often lie in one of the following three approaches: 
\begin{enumerate}
	\item One approach is to minimize the total number of communication rounds between the edge devices and \ac{ps}. With this approach, the main design task is to achieve satisfactory learning accuracy with a minimum number of communication rounds \cite{crane2019dingo, gao2020can,nguyen2020fast}.
	\item Another approach is to reduce the communication overhead per round by exploiting lossy compression techniques such as quantization and pruning \cite{lin2018deep, yuan2018variance,shlezinger2020uveqfed}.
	\item The third approach suggests to accelerate the model aggregation process by leveraging the principles of \textit{over-the-air computation} \cite{zhu2019broadband, yang2020federated, amiri2020machine, xu2021learning}.
\end{enumerate}

The first two approaches treat communication and computation separately via the \textit{transmission-then-aggregation} policy. The latter approach however invokes the idea of analog function computation in sensor networks \cite{nazer2007computation} to address both tasks simultaneously; see Appendix~\ref{App:1} for an illustrative example.


\subsection{Device Scheduling for Over-the-Air Federated Learning}
From the inference point of view, we are interested in increasing the number of devices that share their model parameters. This is in particular of a great interest, as the accuracy of the learned model strongly depends on the size of the aggregated dataset on which the \ac{FL} algorithm is running. The more devices that share their local model parameters, the more reliable the learned model becomes. Although this statement is in general valid for any network, it is not always straightforward in \ac{otaFL} to improve the reliability of the learned model by increasing the number of participating devices. This comes from the wireless channel effects, i.e., fading. The optimal solution to this issue is to design the model transmission and the \ac{FL} algorithm jointly. By this approach, the share of each local model can be designed such that it contributes in the global model learning constructively. This joint approach is however not in general tractable, as the statistical model of the local model parameters in \ac{FL} is not trivial. An alternative sub-optimal solution to this issue is given by \textit{device scheduling} in which some of the devices with potentially less contribution in the global model learning are set off. In this case, performing \ac{FL} and model transmission separately leads to minimal performance degradation compared to the case with all devices being active\footnote{It is worth mentioning that scheduling is in general \textit{sub-optimal}. To illustrate this point, one can think of a real-valued  weight for each device representing its share in the global model learning. The joint design of the \ac{FL} algorithm and transmission scheme determines the optimal choice of these real-valued coefficients while device scheduling restricts them to be either zero or one.}.

Device scheduling has been widely accepted as an efficient \textit{sub-optimal} solution in \ac{otaFL} which avoids performance degradation due to imperfect model aggregation\footnote{Another root of interest on device scheduling in \ac{otaFL} comes from network limitations, such as the total communication overhead, latency and the required bandwidth. We however do not discuss this aspect of device scheduling in this paper.}. The efficiency of this approach has been shown analytically and experimentally through a large set of studies in the literature; see for instance the studies in \cite{yang2020scheduling,shi2021joint,xia2021federated,amiri2020update}. In general, a scheduling policy tries to find the largest collection of local datasets in each communication round while keeping the overall aggregation cost, for a predefined aggregation strategy\footnote{This strategy is designed independent of the channel information. This separation is in fact the root of sub-optimality.}, below a tolerable threshold. In its generic form, device scheduling reduces to an integer programming problem and hence is a \ac{np}-hard problem. Consequently, various sub-optimal algorithms for scheduling are proposed in the literature \cite{yang2020federated} and \cite{chen2020wireless,yang2020energy,Wang2022FLIRS}. These algorithms often either offer good performance at the expense of high computational complexity, e.g., \cite{yang2020federated}, or perform poorly while gaining in terms of complexity, e.g., random scheduling \cite{yang2020scheduling}. This work fills the gap between these two types of scheduling policies by introducing a class of scheduling algorithms with a fair complexity-performance trade-off based on the method of matching pursuit.

\subsection{Contributions}
The device scheduling task is mathematically formulated in the form of a constrained optimization problem: In this problem, the number of active devices is maximized, subject to an inequality constraint on the aggregation cost; see for instance the formulations in \cite{yang2020federated} and \cite{Wang2022FLIRS}. The explicit form of the problem is given by defining a metric for the aggregation cost in the network which depends on the chosen approach for \ac{otaFL}. The key feature of the aggregation cost metric is that it grows with the number of devices and describes the dominant source of error arising in the model aggregation step.

A device scheduling mechanism proposed for a particular aggregation cost metric often extends straightforwardly to other metrics. This follows from the fact that the target optimization problem is of the same \ac{np}-hard form with only the cost metric being different. We hence focus in this work on a recent \ac{otaFL} approach; namely, the approach based on \textit{over-the-air computation}\footnote{Nevertheless, as mentioned through the introductory part, the proposed scheduling scheme is generic and can be extended straightforwardly to other approaches for \ac{otaFL}.} initially proposed in \cite{yang2020federated}. As the main contribution, we develop a class of low-complexity scheduling algorithms based on the method of matching pursuit \cite{mallat1993matching} which incurs a significantly lower computational complexity compared to a benchmark algorithm based on the \ac{dc} programming \cite{yang2020federated}, while degrading the scheduling performance only slightly. In particular, the contributions of this paper can be briefly described as follows:
\begin{itemize}
	\item We formulate the device scheduling task as a sparse support selection problem whose constraint is given by a weighted combination of individual cost constraints. For this problem, we develop a class of greedy algorithms based on the method of matching pursuit. We show that with $K$ devices in the network and an array antenna of size $N$ at the \ac{ps}, the computational complexity of the proposed scheme scales with ${K^p N^q}$ for some $p,q \leq 2$. This is a significant complexity reduction compared to the benchmark scheme in \cite{yang2020federated} which scales with $K\brc{N^2+K}^3 + KN^6$.
	\item The proposed scheduling scheme is parametrized by a set of device weights which need to be tuned. To this end, we invoke the intuitive connection between the original scheduling task and the weighted sparse support selection problem and propose a \textit{subset-cutting} strategy for weighting. Our investigations show that the proposed strategy is robust against variation of design parameters and performs very closely to optimized approaches for weighting. 
	\item We numerically investigate the proposed scheme through a comprehensive set of simulations over the CIFAR-10 dataset and compare the performance with the benchmark. Our investigations demonstrate that the learning performance in this case closely track the close-to-optimal performance achieved by \ac{dc} programming. This finding along with the significantly lower computational complexity implies the efficiency of our proposed scheme for \ac{otaFL}.
\end{itemize}
 
\subsection{Related Work}
The efficiency of device scheduling for \ac{otaFL} is discussed in \cite{yang2020scheduling}. In this study, a standard setting for \ac{otaFL}, i.e., without invoking analog function computation, is considered and the learning performance of three basic schemes for device scheduling are discussed; namely, the random scheduling, the round robin scheme and the so-called \textit{proportional fair} strategy. The investigations in this work reveal a key finding: with links that require high \ac{sinr} thresholds to correctly decode received packets, an optimized scheduling algorithm, e.g., the \textit{proportional fair} strategy, outperforms other schemes while at low \ac{sinr} thresholds, random selection of active devices performs efficiently. There exists further a very low required \ac{sinr} threshold under which device scheduling does not enhance the \ac{FL} convergence rate. This observation agrees with intuition: At very low \ac{sinr} thresholds, virtually all packets are received correctly at the model aggregator, and thus scheduling mechanism is not important. This further agrees with our earlier discussions on the sub-optimality of device scheduling: In the most extreme case of noise-less transmission, the model aggregation strategy and the transmission scheme are designed separately without any performance degradation. As a result, ignoring some local models can only degrade the \ac{FL} performance.

Following the discussions in \cite{yang2020scheduling}, several studies have developed scheduling schemes considering various approaches for \ac{otaFL}; see for instance \cite{amiri2020update,amiri2020machine,yang2020energy,yang2020federated,Wang2022FLIRS,liu2021reconfigurable,xu2021learning} and \cite{wadu2020federated} . The most relevant line of work to our study in this paper is the one given in \cite{yang2020federated}. In this work, a novel aggregation approach for \ac{otaFL} based on the idea of analog function computation, often called \textit{\ac{aircomp}}, is proposed; see \cite{nazer2007computation} and \cite{goldenbaum2013robust,goldenbaum2013harnessing,chen2018uniform,zhu2018mimo,zhai2021hybrid}. Unlike the earlier approaches, in this approach the \ac{ps} utilizes the linear superposition in the uplink multiple access channel\footnote{It is worth mentioning that \ac{aircomp} is only used to \textit{implement} the model aggregation over the air. The aggregation strategy, i.e., the weights for local models, is still designed individually.} and determines the updated global model in each communication round directly from its received signal via a linear receiver.  

Since its proposal, \ac{otaFL} via \ac{aircomp} has been investigated in various network settings. The studies in \cite{Wang2022FLIRS} and \cite{liu2021reconfigurable} extend the idea to  \ac{irs}-aided \ac{mimo} networks concluding that by leveraging large \acp{irs}, data aggregation in \ac{aircomp}-based \ac{otaFL} can be significantly fastened. The study in \cite{cao2021optimized} further considers this approach and develops a novel power-control policy to enhance the learning performance. In \cite{sery2020analog}, the authors propose an algorithm for \ac{otaFL} via \ac{aircomp}, considering a risk-minimization task as the target learning problem. The developed algorithm is shown to approach the convergence rate of the centralized gradient-descent, when the network dimensions grow large.

Similar to other approaches for \ac{otaFL}, \ac{otaFL} via \ac{aircomp} requires a joint design of aggregation strategy and the transmission scheme, in order to perform efficiently. As a result, device scheduling\footnote{As an analytically-tractable, but generally sub-optimal, solution.} is often used to achieve a satisfactory learning performance. For device scheduling in \ac{otaFL} via \ac{aircomp}, the aggregation cost can be determined via the \ac{mse} between the model aggregated by the linear receiver at the \ac{ps} and the target aggregation specified by the predefined aggregation strategy. Considering this cost metric, the authors in \cite{yang2020federated} develop a mechanism for joint scheduling and receiver design based on \ac{dc} programming. The proposed mechanism in \cite{yang2020federated} has been further extended to other network settings; see for instance \cite{Wang2022FLIRS} and \cite{liu2021reconfigurable} for device scheduling in \ac{irs}-aided \ac{mimo} networks.

\subsection{Notation and Organization}
Scalars, vectors and matrices are represented with non-bold, bold lower-case, and bold upper-case letters, respectively. The transposed and the transposed conjugate of $\mH$ are denoted by $\mH^{\trp}$ and $\mH^{\her}$, respectively,  and $\mI_N$ is an $N\times N$ identity matrix. Trace and rank of the matrix $\mH$ are shown by $\tr{\mH}$ and $\mathrm{rank}\brc{\mH}$, respectively. The $\ell_p$-norm of $\bx$ is denoted by $\norm{\bx}_p$. For the particular case of the $\ell_2$-norm, the subscript is dropped, i.e., $\norm{\bx} = \norm{\bx}_2$. The sets $\setR$ and $\setC$ refer to the real axis and the complex plane, respectively. The notation $\mathcal{CN}\brc{\eta,\sigma^2}$ represents the complex Gaussian distribution with mean $\eta$ and variance $\sigma^2$. For the sake of brevity, $\set{1,\ldots,N}$ is shortened to $\dbc{N}$.

The rest of this manuscript is organized as follows: The problem under study is formulated in Section~\ref{sec:system-model}. Section~\ref{sec:Schedul} presents the proposed scheduling scheme along with discussions of its performance and complexity. The detailed derivation of the scheme is then given in Section~\ref{sec:derive}. Section~\ref{sec:NumRes} provides several numerical experiments. Finally, the manuscript is concluded in Section~\ref{sec:conc}.

\section{Problem Formulation}
\label{sec:system-model}
Consider a decentralized setting with $K$ single-antenna edge devices and a single \ac{ps}. The \ac{ps} is equipped with an antenna array of size $N$. The dataset $\setD$ is distributed among the edge devices. This means that device $k$ for $k\in \dbc{K}$ has access to a local dataset $\setD_k$ from which it determines its local vector of model parameters $\btheta_k\in\setR^D$. The model parameters are shared with the \ac{ps} over the uplink multiple access channel. The \ac{ps} aims to determine an update for the global model based on the shared local models.

We assume that the network operates in the \ac{tdd} mode. As a result, the uplink and downlink channels are assumed to be reciprocal. Prior to sharing the local datasets, the devices send pilot sequences over the uplink channels. The \ac{ps} utilizes its received signal to acquire the \ac{csi} of the devices. As a result, the \ac{csi} of all devices is available at the \ac{ps}, and each edge device has access to its own \ac{csi}. For the sake of compactness, we assume that the \ac{csi} acquisition is carried out perfectly, i.e., we assume that the pilot sequences are mutually orthogonal and that the channel estimation error is negligible. The coherence time interval, in which the \ac{csi} remains unchanged, is further assumed to include multiple symbol time intervals. Using the acquired \ac{csi}, the \ac{ps} schedules a subset of devices to be active in model sharing. Our main goal in this work is to design the scheduling protocol. 

\subsection{System Model}
We now focus on a particular communication round and assume that the \ac{ps} performs device scheduling. Let $\setS\subseteq \dbc{K}$ denote the subset of devices that are scheduled to share their model in the current communication round. Considering the \textit{federated averaging} algorithm for \ac{FL}\footnote{This is the most classical approach for \ac{FL}; see also Appendix~\ref{App:1}.} \cite{mcmahan2017communication}, the ultimate goal of the \ac{ps} is to determine an updated global model $\btheta$ as
\begin{align}
	{\btheta}  \brc{\setS} = \sum_{k\in\setS} \phi_k \btheta_k, \label{eq:theta}
\end{align}
for some predefined positive weights $\phi_k$ which are proportional to the size of the local datasets.

The selected device $k$ uploads its local model parameters in $D$ symbol intervals after applying a linear operation on them. More precisely, it transmits in the $d$-th symbol interval
\begin{align}
	\xx_k \brc{d} = \psi_k \theta_{k,d}, 
\end{align}
where $\theta_{k,d}$ denotes the $d$-th entry of $\btheta_k$, and $\psi_k$ is a scalar that satisfies the per-device transmit power constraint $\abs{\psi_k}^2 \leq P$. 

In the sequel, we drop the time interval index, and focus on a single transmission time interval. In this case, the signal received by the \ac{ps} is given by
\begin{align}
	\byy = \sum_{k \in \setS} \xx_k \bh_k + \bxi
\end{align}
where $\bxi$ is a complex zero-mean \ac{awgn} process with variance $\sigma^2$, i.e., $\bxi\sim\mac\man\brc{\boldsymbol{0},\sigma^2 \mI_N}$, and $\mh_k\in \setC^N$ denotes the channel coefficient vector between device $k$ and the \ac{ps}.

The \ac{ps} ultimately intends to compute the global model parameter according to the aggregation strategy in \eqref{eq:theta}. This means that in a particular symbol time interval, its target function is
\begin{align}
{\theta} \brc{\setS} = \sum_{k \in \setS} \phi_k \theta_{k}.
\end{align}
Due to noise and fading processes in uplink channels, the \textit{aggregation} of the selected local model parameters is noisy, and hence the \ac{ps} can only determine an estimate of ${\theta} \brc{\setS}$. To this end, the \ac{ps} invokes the idea of \textit{analog function computation} and utilizes the linear superposition of the multiple access channel by determining the target function directly from the received signal via a linear receiver. This means that it calculates the estimate
\begin{align}
	\hat{\theta} \brc{\bmm , \setS, \eta , \bpsi} = \frac{\bmm^\her \byy}{\sqrt{\eta}} = \sum_{k\in\setS} \frac{\psi_k \theta_{k}}{\sqrt{\eta}} \bmm^\her \bh_k + \frac{\bmm^\her \bxi}{\sqrt{\eta}} \label{eq:hatTheta}
\end{align}
for some linear receiver $\bmm\in \setC^N$ and the power factor $\eta$. In \eqref{eq:hatTheta}, we define the vector $\bpsi$ as
\begin{align}
	\bpsi = \dbc{\psi_1,\ldots,\psi_K}^\trp.
\end{align}
The estimate in \eqref{eq:hatTheta} incurs some error compared to the target function. This error describes the distortion imposed through \textit{model aggregation} in this \ac{otaFL} approach. 

\subsection{Uplink Coordination via Zero-Forcing}
\label{sec:ZF}
As mentioned, each edge device only knows its own \ac{csi}. The \ac{ps} hence coordinates the devices with respect to their weights in the aggregation strategy, i.e., the $\phi_k$'s, using the zero-forcing strategy: Let the receiver and the subset of active devices be set to $\bmm$ and $\setS$, respectively. The \ac{ps}, at the beginning of the communication round, sets the power factor to $\eta = \eta_{\rm ZF}$, where
\begin{align}
	\eta_{\rm ZF} = P \min_{k\in \setS} \dfrac{\abs{\bh_k^\her \bmm}^2}{\phi_k^2}, \label{eq:fix_eta}
\end{align}
and broadcasts it, along with $\bmm$ in the network over a rate-limited noiseless feedback channel. Upon the reception of $\eta_{\rm ZF} $ and $\bmm$, each selected device sets its transmit weight as
\begin{align}
	\psi_{\mathrm{ZF}, k} = \sqrt{	\eta_{\rm ZF} } \phi_k \dfrac{\mh_k^\her \bmm}{\abs{\bh_k^\her \bmm}^2}.
\end{align}
It is straightforward to show that the above choices of $\eta$ and $\psi_k$ satisfy the power constraints at the devices and also construct the desired superposition over the air in the absence of noise in the uplink transmission. This coordination approach can be observed as \textit{zero-forcing}, since the \ac{ps} determines the transmit weights of the devices and the receiver power factor, such that the uplink channel is inverted.

From the signal processing point of view, it is well-known that with respect to the \ac{mse} between the target function and its estimate, zero-forcing coordination is \textit{sub-optimal}. The optimal approach is to derive the \ac{mmse} coordination strategy that minimizes the \ac{mse} between ${\theta} \brc{\setS}$ and $\hat{\theta} \brc{\bmm , \setS, \eta , \bpsi}$. Analytical derivation of the \ac{mmse} strategy is however intractable, due to the nontrivial statistical model of the local model parameters. This point is discussed in greater detail in Appendix~\ref{app:MMSE}.

\begin{remark}
	In \cite{yang2020federated}, the zero-forcing approach for coordination is considered as the optimal strategy, while the \ac{mse} between the target function and its estimate is considered as the design metric. Although this consideration does not impact the correctness of the main results, we believe it is due to a misconception. We address this point in Appendix~\ref{app:MMSE}.
\end{remark}

\subsection{Aggregation Cost Metric}
As mentioned in the introduction, the design of a device scheduling scheme requires defining a metric for the aggregation cost in the network. To this end, we follow the proposed strategy in the literature, e.g., \cite{yang2020federated,Wang2022FLIRS}, and model the aggregation cost via the computation error. Considering zero-forcing coordination at the \ac{ps}, we define the \textit{computation error} to be the \ac{mse} between the estimated computation and the target function, i.e., 
\begin{align}
	\mae \brc{\bmm,\setS} = \Ex{\abs{\hat{\theta} \brc{\bmm , \setS, \eta_{\rm ZF} , \bpsi_{\rm ZF}} -{\theta}\brc{\setS} }^2 }{ }.
\end{align}
By replacing $\eta_{\rm ZF}$ from \eqref{eq:fix_eta} into the definition of the computation error, it is straightforwardly shown that 
\begin{align}
	\mae \brc{\bmm, \setS}=  \frac{\sigma^2}{P}\max_{k\in \setS} \phi_k^2 \dfrac{\norm{\bmm}^2}{\abs{\bh_k^\her \bmm}^2}. \label{eq:E_func}
\end{align}
This error serves as the metric for the aggregation cost in the network in the sequel. 

Proposition~\ref{Prop:1} shows that the computation error has the generic property exhibited by a metric of aggregation cost. This means that it either increases or remains unchanged, as the number of active devices increases. Consequently, we design the scheduling protocol and the linear receiver, such that the above design metric does not exceed a maximum tolerable value.

\begin{proposition}
	\label{Prop:1}
	Consider subsets $\setS_i\subseteq \dbc{K}$ for $i\in\set{1,2}$. Denote the computation error that is minimized with respect to the linear receiver by $\mae_i^{\min}$, i.e.,
	\begin{align}
		\mae_i^{\min} = \min_{\bmm\in \setC^N} \mae \brc{\bmm, \setS_i}.
	\end{align}
	If  $\setS_1 \subseteq \setS_2$; then, we have $\mae_1^{\min} \leq \mae_2^{\min}$.
\end{proposition}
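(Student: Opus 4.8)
The plan is to avoid comparing the two optimized receivers to each other, and instead to establish the inequality pointwise in the receiver $\bmm$ and only then optimize. The starting observation is that, up to the common positive prefactor $\sigma^2/P$, the cost $\mae \brc{\bmm, \setS}$ in \eqref{eq:E_func} is a maximum over $\setS$ of the nonnegative per-device quantities $\phi_k^2 \norm{\bmm}^2 / \abs{\bh_k^\her \bmm}^2$. Since $\setS_1 \subseteq \setS_2$, the maximum defining $\mae \brc{\bmm, \setS_2}$ ranges over a superset of the index set defining $\mae \brc{\bmm, \setS_1}$, and enlarging the index set of a maximum of nonnegative terms can only leave it unchanged or increase it. This yields the pointwise domination
\begin{align}
	\mae \brc{\bmm, \setS_1} \leq \mae \brc{\bmm, \setS_2} \quad \text{for every } \bmm \in \setC^N .
\end{align}

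Given this bound, the claim follows from the elementary monotonicity of the minimum: if $f \leq g$ pointwise, then $\min f \leq \min g$. Concretely, I would let $\bmm_2$ denote a minimizer of $\mae \brc{\cdot, \setS_2}$ and write $\mae_1^{\min} \leq \mae \brc{\bmm_2, \setS_1} \leq \mae \brc{\bmm_2, \setS_2} = \mae_2^{\min}$, where the first inequality is the definition of $\mae_1^{\min}$ as an infimum over all receivers, the second is the pointwise domination above evaluated at $\bmm_2$, and the last equality is the choice of $\bmm_2$. This is the entire argument; note in particular that one never needs to relate the optimal receiver for $\setS_1$ to that for $\setS_2$, which is what makes the direct substitution delicate and motivates the pointwise route.

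The only real obstacle is a mild well-posedness point in the pointwise step: for receivers $\bmm$ orthogonal to some $\bh_k$ with $k\in\setS$, the corresponding term has a vanishing denominator, so $\mae \brc{\bmm, \setS}$ must be read as $+\infty$ there. This does not disrupt the argument, since an infinite right-hand side trivially dominates, and such degenerate $\bmm$ are avoided by the minimization whenever at least one channel is nondegenerate. I would also remark that the objective is invariant under the scaling $\bmm \mapsto \alpha \bmm$ for $\alpha \neq 0$, so the optimization may be restricted to the unit sphere, on which the finite-valued part of the objective is continuous and the minimum is attained; this justifies writing $\mae_i^{\min}$ as a genuine minimum, exactly as in the statement.
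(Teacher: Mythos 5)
Your proposal is correct and follows essentially the same route as the paper: both establish the pointwise domination $\mae\brc{\bmm,\setS_1} \leq \mae\brc{\bmm,\setS_2}$ for every $\bmm$ (the paper via the decomposition $\max\set{\mae\brc{\bmm,\setS_1},\mae\brc{\bmm,\setS_2-\setS_1}}$, you via the equivalent observation that enlarging the index set of a maximum cannot decrease it) and then pass to minima over $\bmm$. Your added remarks on vanishing denominators and scale invariance are harmless refinements of the same argument, not a different approach.
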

\begin{proof}
	The proof is straightforwardly derived from \eqref{eq:E_func}. By definition, we can write for any linear operator $\bmm$ 
	\begin{subequations}
		\begin{align}
			\mae \brc{\bmm, \setS_2} &=  \frac{\sigma^2}{P}\max_{k\in \setS_2} \phi_k^2 \dfrac{\norm{\bmm}^2}{\abs{\bh_k^\her \bmm}^2}\\
			&= \max \set{ \frac{\sigma^2}{P}\max_{k\in \setS_1} \phi_k^2 \dfrac{\norm{\bmm}^2}{\abs{\bh_k^\her \bmm}^2} , \frac{\sigma^2}{P}\max_{k\in \setS_2-\setS_1} \phi_k^2 \dfrac{\norm{\bmm}^2}{\abs{\bh_k^\her \bmm}^2}}\\
			&= \max \set{ \mae \brc{\bmm, \setS_1} , \mae \brc{\bmm, \setS_2-\setS_1} } \\
			&\geq \mae \brc{\bmm, \setS_1}.
		\end{align}
	\end{subequations}
	Considering the functions $\mae \brc{\bmm, \setS_i}: \setC^N \mapsto \setR $ for $i\in\set{1,2}$, we use the fact that above inequality is valid for all $\bmm\in\setC^N$ and write
	\begin{align}
		\min_{\bmm\in \setC^N} \mae \brc{\bmm, \setS_2}  \geq \min_{\bmm\in \setC^N}\mae \brc{\bmm, \setS_1}
	\end{align}
	which concludes the proof.
\end{proof}

\subsection{Scheduling Problem}
Proposition~\ref{Prop:1} verifies that the defined computation error increases as the number of selected terminals grows in the network. This is in contrast to a reliability metric for \ac{FL}, which improves as the size of the collected dataset grows. In this respect, device scheduling finds a fair trade-off by solving a joint optimization: It tries to maximize the number of selected edge devices\footnote{One may initially propose to maximize the size of the aggregated dataset, i.e., $\sum_{k \in \setS}\setD_k$, instead of the number of selected devices. Nevertheless, one should note that the weights $\phi_k$ are chosen with respect to the size of local datasets. This means that the impact of different dataset sizes is directly considered in the predefined model aggregation strategy, and hence is not further addressed in scheduling.}, subject to the constraint that the computation error is kept below a maximum tolerable level. Mathematically, we can represent this problem as
\begin{align}
	&\max_{\bmm\in\setC^N, \setS} \; \abs{\setS} \\
	&\begin{array}{ll}
		\subto & \displaystyle \max_{\setS} \; \mae \brc{\bmm, \setS} \leq \epsilon\\
		& \setS \subseteq \dbc{K}
	\end{array}\nonumber	.
\end{align}
for some positive real $\epsilon$ representing the maximum tolerable computation error.
By substituting \eqref{eq:E_func} into the optimization, the scheduling problem reduces to
\begin{align}
	&\max_{\bmm\in\setC^N, \setS} \; \abs{\setS} \label{eq:main} \\
	&\begin{array}{lll}
		\subto & \phi_k^2 {\norm{\bmm}^2} - \gamma {\abs{\bh_k^\her \bmm}^2} \leq 0  & \forall k\in\setS \\
		& \setS \subseteq \dbc{K} &
	\end{array}\nonumber	.
\end{align}
where we define $\gamma = P\epsilon/\sigma^2$. 

The optimization problem in \eqref{eq:main} reduces to an \ac{np} hard problem, and hence its solution cannot generally be found within a feasible time. As a result, sub-optimal schemes are developed to approximate the optimal scheduling in polynomial time. An example of such approaches is given below in Example~\ref{ex:Alg}. This example is of particular interest, as it describes the \textit{close-to-optimal} and \textit{benchmark} scheduling policies which are considered as the references in this work. 

\begin{example}
	\label{ex:Alg}
	The classical approach is to convert the scheduling task into a sparse recovery problem with concave constraints. To this end, the authors in \cite{yang2020federated} show that solving the original problem in \eqref{eq:main} is equivalent to solving the following optimization:
	\begin{align}
		&\min_{\mC \in\setC^{N \times N}, \bss \in \setR_+^K} \; \norm{\bss}_0  \\ 
		&\begin{array}{lll}
			\subto & \tr{\mC} - \displaystyle \frac{\displaystyle \gamma}{\displaystyle \phi_k^2} \bh_k^\her \mC\bh_k \leq  \rms_k  & \forall k\in\dbc{K}\\
			& \mC \succeq \boldsymbol{0}_{N \times N} & \\
			&\tr{\mC} \geq 1 & \\
			&\mathrm{rank} \brc{\mC} = 1 &
		\end{array}\nonumber	.
	\end{align}
The solution of this equivalent form is then approximated by different approaches. In this paper, we focus on two techniques; namely, the benchmark and the close-to-optimal approach. 

In the benchmark, the $\ell_0$-norm is replaced by the $\ell_1$-norm and the rank constraint is dropped. The resulting problem is then solved using alternating optimization, and the resulting $\mC$ is approximated by a rank-one matrix using a low-rank approximation technique \cite{tropp2017practical}. 

The close-to-optimal approach replaces the $\ell_0$-norm objective and the rank-one constraint via \ac{dc} expressions. It then uses tools from \ac{dc} programming \cite{tao1997convex,le1997solving} to approximate the solution of the determined alternative form. We call this approach close-to-optimal, as the numerical investigations in \cite{yang2020federated} show that for small dimensions, where exponential search is feasible, the \ac{dc} programming-based approach is consistent with the optimal solution.
\end{example}

In this work, we deviate from the benchmark and close-to-optimal approaches described in Example~\ref{ex:Alg} and propose a low-complexity and highly efficient algorithm that directly applies matching pursuit to address the scheduling task in \eqref{eq:main}.

\section{Scheduling via Matching Pursuit}
\label{sec:Schedul}
The scheduling problem in \eqref{eq:main} can be considered to be a sparse recovery problem with an unconventional constraint. We intend to find the smallest subset of devices, such that by setting them off, the \textit{minimum} computation error\footnote{That is minimized over all linear receivers $\bmm$.} falls below the maximum tolerable error. Defining $\bar{\rms}_k$ to be the \textit{inactivity} of device $k$, i.e., $\bar{\rms}_k = 0$ when $k\in\setS$ and $\bar{\rms}_k=1$ if $k\notin\setS$, one can observe the scheduling task as the problem of finding the sparsest $\bar{\bss} = \dbc{\bar{\rms}_1,\ldots,\bar{\rms}_K}$, such that the computation error constraint is satisfied. Although the scheduling constraint is different from the conventional least-squares constraint in sparse recovery, similar greedy approaches can be developed for this problem. 

The most common greedy approach for sparse recovery is \textit{matching pursuit} which is widely used in the context of compressive sensing; see for instance \cite{foucart2013invitation,donoho2006compressed,candes2006robust,tropp2007signal,rebollo2002optimized,wang2012generalized,cai2011orthogonal,needell2009cosamp}. Intuitively, matching pursuit follows a step-wise greedy approach to solve a regression problem: it starts with the sparsest vector of regression coefficients, i.e., the all-zero vector, and gradually add new indices to the support, such that the residual sum-of-squares\footnote{More precisely, an upper-bound on the residual sum-of-squares.} is maximally suppressed. For linear regression, it is shown that to this end, the selected indices in each iteration should be those that have maximal correlation with the regression error achieved in the previous iteration\footnote{This is where the appellation comes from.}.

For the scheduling problem in \eqref{eq:main}, we can develop the same framework. As the initial choice for $\setS$, we let all the devices be active. We then iteratively remove elements from $\setS$ and update $\bmm$ in each iteration such that the computation error is maximally decreased. The resulting algorithm is presented in Algorithm~\ref{alg:MPS}. We skip the detailed derivation at this stage and leave it for Section~\ref{sec:derive}.  For notational compactness, we denote the relation between the inputs and outputs of this algorithm as
\begin{align}
	\brc{\bmm , \setS} = \maA_1\brc{\mH, \bphi},
\end{align}
where $\bphi$ is the vector of device coefficients, i.e., 
\begin{align}
	\bphi = \dbc{\phi_1,\ldots,\phi_K}^\trp,
\end{align}
and $\mH$ is the uplink channel matrix, i.e., 
\begin{align}
	\mH = \dbc{\bh_1,\ldots,\bh_K}.
\end{align}

\begin{algorithm}[H]
	\caption{Matching Pursuit Based Scheduling $\maA_1\brc{\cdot}$} 
	\label{alg:MPS}
	\begin{algorithmic}
		\INPUT The updating strategy $\Pi\brc{\cdot}$, and the real positive scalars $\phi_k$ for $k\in \dbc{K}$.
		\REQUIRE{ Set $\setS\itr{0} = \dbc{K}$, $i\itr{1} = \emptyset$, $\mW\itr{1} = \mI_K$, $\Delta\itr{1} = +\infty$ and $F_k\itr{1} = 1$ for $k\in\dbc{K}$.}
		\WHILE{ $\Delta\itr{t} > 0$ }{
			\STATE Set $\setS\itr{t} = \setS\itr{t-1} - \set{i\itr{t}}$
			\STATE Update $\mW\itr{t}$ by setting $\dbc{\mW\itr{t}}_{i\itr{t},i\itr{t}} = 0$ and
			\begin{align}
				\dbc{\mW\itr{t}}_{k,k} = \Pi \brc{ F_k\itr{t} }
			\end{align}
			for all $k\in\setS\itr{t}$.
			\STATE Find the SVD
			\begin{align}
				\mH \sqrt{\mW\itr{t}} = \mV\itr{t} \mSigma\itr{t} \mU^{\brc{t} \her}
			\end{align}
			and let $n\itr{t}$ be the index of the largest singular value.
			\STATE Let $\bmm\itr{t} = \bvv\itr{t}_{n\itr{t}}$, i.e., the $n\itr{t}$-th column of $ \mV\itr{t}$.
			\STATE Update the \textit{constraint indicator} $F_k\itr{t}$ for device $k \in \setS\itr{t} $ as
			\begin{align}
				F_k\itr{t}  = \phi_k^2  - \gamma {\abs{\bh_k^\her \bmm\itr{t} }^2}. \label{eq:F_k}
			\end{align}
			\STATE Find the next device index as
			\begin{align}
				i\itr{t+1} = \argmax_{k \in \setS\itr{t} } F_k\itr{t}.
			\end{align}
			\STATE Update the maximal constraint as
			\begin{align}
				\Delta\itr{t+1} =  \max_{k \in \setS\itr{t} } F_k\itr{t}.
			\end{align}
			\STATE Let $t \leftarrow t+1$.
		}
		\ENDWHILE
		\OUTPUT $\bmm\itr{T}$ and $\setS\itr{T}$ with $T$ being the last iteration.
	\end{algorithmic} 
\end{algorithm}

The key components of Algorithm~\ref{alg:MPS} are the weighting matrix $\mW\itr{t}\in \setR^{K\times K}$ and the updating strategy $\Pi\brc{\cdot}:\setR \mapsto \setR$. The weighting matrix $\mW\itr{t}\in \setR^{K\times K}$ is a diagonal matrix whose $k$-th diagonal entries weights the contribution of device $k$ into the average achieved computation error\footnote{This point becomes clear as the reader goes through Section~\ref{sec:derive}.} in iteration $t$. The mapping $\Pi\brc{\cdot}$ moreover uses the contribution of device $k$ into the average computation error and updates its weight for the next iteration. This latter mapping is the main degree of freedom in Algorithm~\ref{alg:MPS}. By fixing the updating strategy $\Pi\brc{\cdot}$, one can run the algorithm by iterating for a few number of iterations till the algorithm converges to a feasible point. Noting that the algorithm in each iteration calculates only a single \ac{svd}, one can readily conclude that the proposed algorithm significantly reduces the complexity compared to the benchmark and close-to-optimal approaches illustrated in Example~\ref{ex:Alg}. We discuss the updating strategy and complexity of the algorithm in greater detail in the forthcoming subsections.

\subsection{Weight Updating via Subset-Cutting}
In general, the weight of device $k$, i.e., $\dbc{\mW\itr{t}}_{kk}$, in each iteration should be tuned such that the algorithm converges as fast as possible. Nevertheless, an extensive search for $\dbc{\mW\itr{t}}_{kk}$ leads to high complexity. To avoid this unnecessary complexity, we apply a simple binary tuning strategy to which we refer as \textit{subset-cutting}. For the sake of compactness, we define the notation $w_k\itr{t} = \dbc{\mW\itr{t}}_{kk}$ and assume that $0 < w_k\itr{t} <1$ for $k\in \setS\itr{t} $.

Let us cut the subset of selected devices in iteration $t$ into two subsets: 
\begin{itemize}
	\item  $\setS_+\itr{t}$ which contains those selected devices whose corresponding constraint indicator, i.e., $F_k\itr{t}$ defined in \eqref{eq:F_k}, is positive.
	\item $\setS_-\itr{t}$ which includes the remaining devices.
\end{itemize}
Considering the optimization in \eqref{eq:E_func}, our ultimate goal is to update the weights, such that the subset\footnote{The subset of devices in iteration $t+1$ whose corresponding computation error constraint is violated.} $\setS_+\itr{t+1}$ contains as few devices as possible. The simplest strategy to achieve this goal is to weight the two subsets inversely proportional. We apply this idea, by modeling the sign of $F_k\itr{t}$ for $k\in \setS$ as \ac{iid} Bernoulli random variables and letting the weight of each device be its probability. More precisely, we set the devices in $\setS_+\itr{t}$ and  $\setS_-\itr{t}$ to be weighted by $\delta$ and $1-\delta$ for some $\delta \in \brc{0,1}$, respectively. This means that the mapping $\Pi\brc\cdot$ in Algorithm~\ref{alg:MPS} is set to
\begin{align}
	\Pi \brc{ x } = \begin{cases}
		\delta & x>0\\
		1-\delta & x\leq 0
	\end{cases}
\end{align}
for some  of $0 < \delta <1$. The value of $\delta$ is then tuned for a given channel model, based on the statistics of the channel.

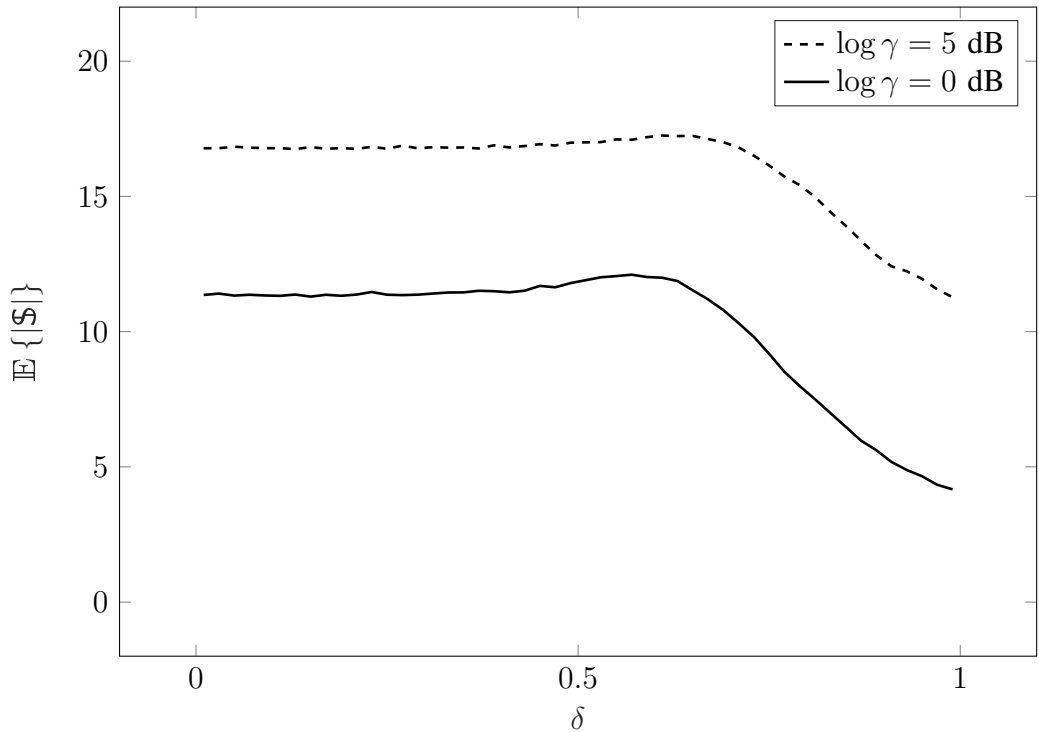
\begin{figure}
	\centering
%
%
\begin{tikzpicture}

\begin{axis}[%
width=4.8in,
height=3.4in,
at={(1.262in,0.697in)},
scale only axis,
xmin=-0.1,
xmax=1.1,
xtick={{0}, {.5} , {1}},
xticklabels={{$0$} , {$0.5$} , {$1$}},
xlabel style={font=\color{white!15!black}},
xlabel={$\delta$},
ymin=-2,
ymax=22,
ytick={{0}, {5} , {10}, {15}, {20}},
yticklabels={{$0$} , {$5$} , {$10$}, {$15$} , {$20$}},
ylabel style={font=\color{white!15!black}},
ylabel={$\Ex{\abs{\setS}}{ }$},
axis background/.style={fill=white},
legend style={legend cell align=left, align=left, draw=white!15!black}
]
\addplot [dashed, color=black, line width=1.0pt]
  table[row sep=crcr]{%
0.01	16.78\\
0.03	16.782\\
0.05	16.8375\\
0.07	16.8035\\
0.09	16.784\\
0.11	16.7835\\
0.13	16.749\\
0.15	16.825\\
0.17	16.7625\\
0.19	16.7835\\
0.21	16.761\\
0.23	16.834\\
0.25	16.761\\
0.27	16.8765\\
0.29	16.78\\
0.31	16.8195\\
0.33	16.8035\\
0.35	16.809\\
0.37	16.7715\\
0.39	16.8895\\
0.41	16.81\\
0.43	16.863\\
0.45	16.9305\\
0.47	16.882\\
0.49	16.985\\
0.51	16.9995\\
0.53	17.009\\
0.55	17.107\\
0.57	17.095\\
0.59	17.191\\
0.61	17.25\\
0.63	17.232\\
0.65	17.2365\\
0.67	17.1215\\
0.69	17.005\\
0.71	16.808\\
0.73	16.5065\\
0.75	16.1405\\
0.77	15.734\\
0.79	15.417\\
0.81	14.979\\
0.83	14.4275\\
0.85	13.9215\\
0.87	13.3665\\
0.89	12.829\\
0.91	12.416\\
0.93	12.233\\
0.95	11.95\\
0.97	11.5635\\
0.99	11.271\\
};
\addlegendentry{$\log \gamma = 5$ dB}

\addplot [color=black, line width=1.0pt]
table[row sep=crcr]{%
	0.01	11.357\\
	0.03	11.4065\\
	0.05	11.333\\
	0.07	11.364\\
	0.09	11.337\\
	0.11	11.324\\
	0.13	11.371\\
	0.15	11.294\\
	0.17	11.363\\
	0.19	11.3255\\
	0.21	11.37\\
	0.23	11.466\\
	0.25	11.365\\
	0.27	11.3505\\
	0.29	11.366\\
	0.31	11.4085\\
	0.33	11.447\\
	0.35	11.4505\\
	0.37	11.511\\
	0.39	11.495\\
	0.41	11.454\\
	0.43	11.514\\
	0.45	11.6915\\
	0.47	11.6395\\
	0.49	11.7935\\
	0.51	11.902\\
	0.53	12.0105\\
	0.55	12.053\\
	0.57	12.1065\\
	0.59	12.02\\
	0.61	11.993\\
	0.63	11.871\\
	0.65	11.528\\
	0.67	11.1935\\
	0.69	10.7985\\
	0.71	10.3145\\
	0.73	9.805\\
	0.75	9.177\\
	0.77	8.5145\\
	0.79	7.9855\\
	0.81	7.5005\\
	0.83	6.993\\
	0.85	6.486\\
	0.87	5.974\\
	0.89	5.6215\\
	0.91	5.182\\
	0.93	4.886\\
	0.95	4.653\\
	0.97	4.3355\\
	0.99	4.1685\\
};
\addlegendentry{$\log \gamma = 0$ dB}

\end{axis}
\end{tikzpicture}%
	\caption{Average $\abs{\setS}$ against $\delta$.}
	\label{fig:Nr_vs_delta}
\end{figure}

Numerical experiments show that the tuning of $\delta$ is rather robust. A particular case is observed in Fig.~\ref{fig:Nr_vs_delta} which shows the average $\abs{\setS}$ against $\delta$. Here, we set $K=20$ and $N=6$ and calculate the average numerically over $2000$ realizations of \ac{iid} complex Gaussian channels with zero mean and unit variance. The figure is given for two different values of the parameter $\gamma$. As we can see, the optimal choice of $\delta$ changes with $\gamma$. Nevertheless, the algorithm performs almost identically for \textit{small enough} choices of $\delta$. This indicates that in practice, it is enough to keep $\delta$ small, and an update of its value in each iteration is not necessary. Consequently, in the rest of this paper, we use this subset-cutting strategy with a fixed choice of $\delta$ to update the weight of the devices, and leave further discussions on more efficient update strategies as a direction for future work.

\subsection{Efficiency of Scheduling via Matching Pursuit}
The matching pursuit approach reduces the computational complexity of scheduling at the cost of performance degradation. We now apply some initial numerical investigations to quantify the scale of this complexity-performance trade-off. To this end, the same setting as the one in Fig.~\ref{fig:Nr_vs_delta} is considered, i.e., $N=6$, $K=20$ and $\phi_k = 1$ for $k\in\dbc{K}$. The uplink channels are further generated \ac{iid} Gaussian with zero mean and unit variance. For this setting, the matching pursuit approach, as well as the benchmark and close-to-optimal scheduling policies \cite{yang2020federated}, are used to perform scheduling for various choices of $\gamma$ and the performance is averaged over $2000$ channel realizations. The results are then represented in Fig.~\ref{fig:Nr_vs_gamma}, where the average number of selected devices is plotted against $\gamma$. 

As the figure shows, the proposed algorithm performs rather close to the close-to-optimal scheduling and outperforms considerably the state-of-the-art; see \cite{yang2020federated} for more details on the  close-to-optimal and the benchmark. It however is of a significantly lower computational complexity compared to the benchmark, as we show in the forthcoming subsection. This observation\footnote{Along with the complexity analysis in the next sub-section.} implies that the proposed approach based on matching pursuit is highly efficient for scheduling.

The step-wise nature of Algorithm~\ref{alg:MPS} brings this question into mind whether by using common extensions of step-wise regression techniques, the performance of this algorithm can be dominantly enhanced. To find an answer to this question, we consider the particular instance of extending Algorithm~\ref{alg:MPS} via a \textit{bidirectional} search. To this end, we note that the search in Algorithm~\ref{alg:MPS} is performed in a \textit{backward} fashion. A standard extension to this approach is to add one further step of \textit{forward selection} after the \textit{backward} search is over. By doing so, \textit{wrongly-removed} devices can be added again to the support. We hence extend the basic algorithm by revisiting the subset of removed devices once again at the end of each iteration, and place back to $\setS$ those removed devices whose updated constraint indicator is not positive. As observed in Fig.~\ref{fig:Nr_vs_gamma}, this extension is practically of no gain and only increases the complexity\footnote{Note that the corresponding algorithm still has significantly lower computational complexity compared to the benchmark and close-to-optimal policies.}. This implies that Algorithm~\ref{alg:MPS} is efficient from both performance and complexity viewpoints.
 

\begin{figure}
	\centering
%
%
\begin{tikzpicture}
	
	\begin{axis}[%
		width=4.8in,
		height=3.4in,
		at={(1.262in,0.697in)},
		scale only axis,
		xmin=-11,
		xmax=31,
		xlabel style={font=\color{white!15!black}},
		xlabel={$\gamma$},
		ymin=-2,
		ymax=22,
		ytick={{0}, {5} , {10}, {15}, {20}},
		yticklabels={{$0$} , {$5$} , {$10$}, {$15$} , {$20$}},
		ylabel style={font=\color{white!15!black}},
		ylabel={$\Ex{\abs{\setS}}{ }$},
		axis background/.style={fill=none},
		legend style={at={(.97,.29)},legend cell align=left, align=left, draw=white!15!black}
		]
		\addplot [ color=black, line width=1.0pt, mark=o, mark options={solid, black}]
		table[row sep=crcr]{%
-10	1\\
-7	2.2\\
-4	4.9\\
-1	9.5\\
2	13.8\\
5	17.5\\
8	19\\
11	19.6\\
14	20\\
17	20\\
20	20\\
23	20\\
		};
\addlegendentry{Algorithm~1}
		
\addplot [dashdotdotted,color=black, line width=1.0pt]
		table[row sep=crcr]{%
-10	1\\
-7	2.428\\
-4	5.912\\
-1	10.421\\
2	14.608\\
5	17.5\\
8	19\\
11	19.6\\
14	20\\
17	20\\
20	20\\
23	20\\
		};
\addlegendentry{Extension}
		
\addplot [ dashed, color=black, line width=.5pt, mark=square, mark options={solid, black}]
		table[row sep=crcr]{%
-10	1\\
-7	2.8\\
-4	6.9\\
-1	11.8\\
2	17.9\\
5	19.6\\
8	19.8\\
11	19.9\\
14	19.9\\
17	20\\
20	20\\
23	20\\
		};
\addlegendentry{Close-to-optimal}
		
\addplot [ dotted, color=black, line width=.5pt, mark=x, mark options={solid, black}]
		table[row sep=crcr]{%
			-10	0.1\\
			-5	1.25\\
			0	1.3\\
			5	2.5\\
			10	13\\
			15	19\\
			20	19.5\\
			25	19.7\\
			30	19.9\\
		};
		\addlegendentry{Benchmark}
		
	\end{axis}
\end{tikzpicture}%
	\caption{Average $\abs{\setS}$ against $\gamma$ for the proposed algorithm as well as the benchmark and close-to-optimal approach.}
	\label{fig:Nr_vs_gamma}
\end{figure}
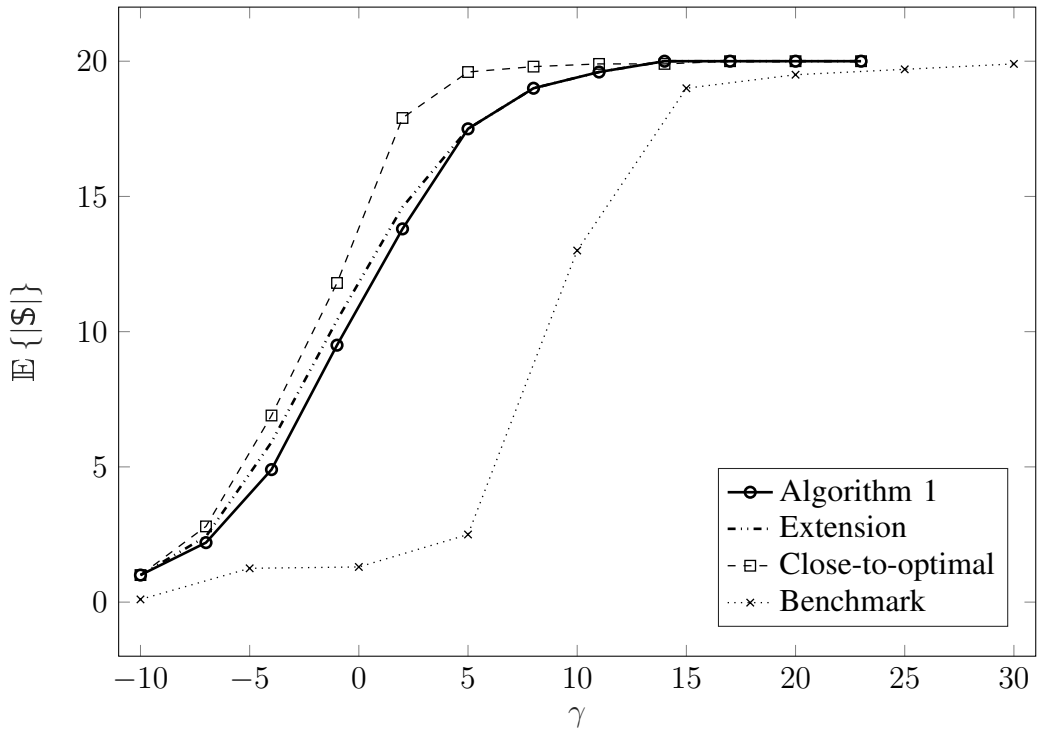

\subsection{Complexity of Scheduling via Matching Pursuit}
The key property of Algorithm~\ref{alg:MPS} is its significant low-complexity. We illustrate this point by comparing its complexity to that of the benchmark and close-to-optimal scheduling policies. The benchmark approach solves the scheduling problem iteratively with a fixed number of iterations; see the algorithms based on semidefinite relaxation in \cite{yang2020federated}. In each iteration, the algorithm performs two steps of updates, where the first step is of the complexity order $\mao\brc{\brc{N^2+K}^3}$, and the second step is of the complexity order $\mao\brc{N^6}$. We can hence conclude that the benchmark complexity for scheduling is
\begin{align}
	\mac_{\rm B} &= \mao\brc{\brc{N^2+K}^3 + N^6}
\end{align}

The close-to-optimal approach that uses \ac{dc} programming also runs iteratively a two-step update with the same order of complexity. It runs the update in each iteration for all available $K$ devices; see \cite{yang2020federated}. As a result, the complexity of the close-to-optimal approach is
\begin{align}
	\mac_{\rm DCP} &= \mao\brc{K\brc{N^2+K}^3 + KN^6}.
\end{align}

We now consider the proposed algorithm based on matching pursuit. This algorithm runs for $T \leq K$ iterations. In iteration $t$, an \ac{svd} decomposition of a matrix of size $N\times \brc{K-t+1}$ is required\footnote{Remember that $\mW\itr{t}$ has $t-1$ all-zero columns.} whose complexity is 
\begin{align}
	\mac_{\rm SVD}\itr{t} = \mao\brc{\brc{K-t+1}N^2} \leq \mao\brc{ K N^2 }.
\end{align}
The algorithm is hence of the complexity order of 
\begin{align}
	\mac_{\rm MP} = \mao\brc{\brc{K-T}KN^2} \leq \mao\brc{ K^2 N^2 }
\end{align}
which is drastically less than the complexity of the benchmark and close-to-optimal policies. It is worth mentioning that the complexity of Algorithm~\ref{alg:MPS} can be even further reduced. In fact, in each iteration, we only need the singular-vector which corresponds to the largest singular-value of the weighted channel matrix; see line 5 in the algorithm. A complete \ac{svd} determination is hence unnecessary and one can use alternative algorithms to find only the desired singular-vector. An example is the Lanczos algorithm whose complexity, though not explicitly derived, is known to be smaller than \ac{svd} calculation \cite{parlett1982estimating}. Further discussions on such algorithms can be found in \cite{SCHWETLICK20031,liang2014computing}. One can hence conclude that the complexity of the proposed algorithm is in general of the order of
\begin{align}
	\mac_{\rm MP} &= \mao\brc{K^p N^q}
\end{align}
for some $0< p,q<2$. This is a huge complexity reduction at the expense of a minor performance  degradation which as we see later in Section~\ref{sec:NumRes} can  be neglected in most practical scenarios.

Fig.~\ref{fig:Time_vs_gamma} compares the average run-time of Algorithm~\ref{alg:MPS} with the one of the close-to-optimal approach for the setting considered in Fig.~\ref{fig:Nr_vs_gamma}. Here, the time axis is shown in logarithmic scale, in order to enable comparison.  The results clearly show the drastic complexity reduction achieved by the matching pursuit approach. These results along with those represented in the previous subsection imply that the proposed algorithm is a suitable candidate for scheduling in real-time applications of \ac{otaFL}.

\begin{remark}
	As Fig.~\ref{fig:Time_vs_gamma}, the complexity of both algorithms drops considerably as $\gamma$, i.e., the error tolerance level, increases. This is due to the fact that both algorithms select more devices at higher choices of $\gamma$, and hence terminate faster in this regime. Considering the step-wise nature of Algorithm~\ref{alg:MPS}, we can further flatten the complexity curve by switching from \textit{backward selection} to \textit{forward selection} at lower tolerances. More precisely, for small choices of $\gamma$, one can start from the empty set $\setS = \emptyset$ and add active devices in each iteration. The derivations for this complementary algorithm follows the exact steps as in Algorithm~\ref{alg:MPS}. We hence skip it at this point and leave it as a natural extension of this work. 
\end{remark}
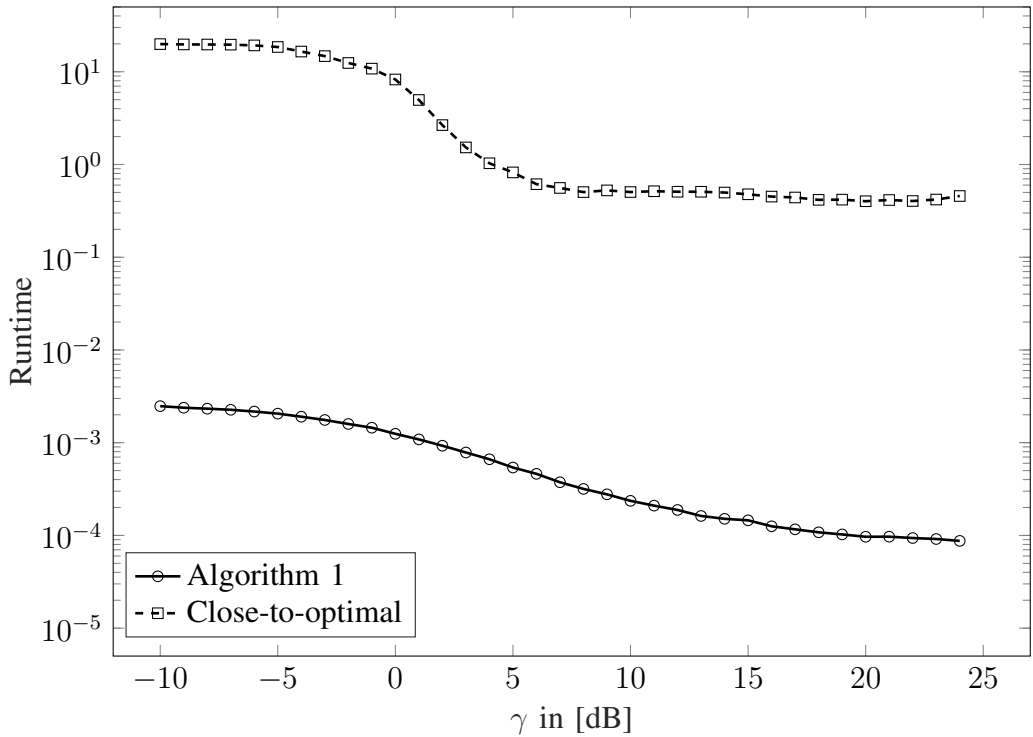
\begin{figure}
	\centering
%
%
\begin{tikzpicture}
	
	\begin{axis}[%
		width=4.8in,
		height=3.4in,
		at={(1.262in,0.697in)},
		scale only axis,
		xmin=-12,
		xmax=27,
		xlabel style={font=\color{white!15!black}},
		xtick={-10,-5,0,5,10,15,20,25},
		xticklabels={{$-10$},{$-5$},{$0$},{$5$},{$10$},{$15$},{$20$},{$25$}},
		xlabel={$\gamma$ in [dB]},
		ymin=0.000005,
		ymax=50,
		ymode=log,
		ytick={0.00001, 0.0001, 0.001,.01,.1,1,10,100},
		yticklabels={{$10^{-5}$},{$10^{-4}$},{$10^{-3}$},{$10^{-2}$},{$10^{-1}$},{$10^{0}$},{$10^{1}$},{$10^{2}$}},
		ylabel style={font=\color{white!15!black}},
		ylabel={Runtime},
		axis background/.style={fill=none},
		legend style={at={(.33,.16)},legend cell align=left, align=left, draw=white!15!black}
		]

\addplot [color=black, line width=1.0pt, mark=o, mark options={thin, black}]
table[row sep=crcr]{%
-10	0.00247808700000\\
-9	0.00238302700000\\
-8	0.00232718000000\\
-7	0.00226460000000\\
-6	0.00216760900000\\
-5	0.00205823200000\\
-4	0.00190670500000\\
-3	0.00175786000000\\
-2	0.00159120600000\\
-1	0.00145067500000\\
0	0.00124719100000\\
1	0.00108513800000\\
2	0.00092774870000\\
3	0.00078405620000\\
4	0.00066284180000\\
5	0.00053996090000\\
6	0.00046159270000\\
7	0.00037439580000\\
8	0.00031739470000\\
9	0.00027736900000\\
10	0.00023603200000\\
11	0.00020959620000\\
12	0.00018816470000\\
13	0.00016218900000\\
14	0.00015103580000\\
15	0.00014548060000\\
16	0.00012548690000\\
17	0.00011634110000\\
18	0.00010802030000\\
19	0.00010248900000\\
20	0.00009672642000\\
21	0.00009690762000\\
22	0.00009373665000\\
23	0.00009149313000\\
24	0.00008729935000\\
};
\addlegendentry{Algorithm~1}

\addplot [dashed,color=black, line width=1.0pt, mark=square, mark options={solid,thin, black}]
table[row sep=crcr]{%
-10	19.884981\\
-9	19.726808\\
-8	19.683913\\
-7	19.622551\\
-6	19.254122\\
-5	18.506432\\
-4	16.525658\\
-3	14.756451\\
-2	12.443654\\
-1	10.811022\\
0	8.255982\\
1	4.963903\\
2	2.660218\\
3	1.528051\\
4	1.031238\\
5	0.821974\\
6	0.613569\\
7	0.558712\\
8	0.504097\\
9	0.525794\\
10	0.504733\\
11	0.514993\\
12	0.508637\\
13	0.508222\\
14	0.498527\\
15	0.477361\\
16	0.45102\\
17	0.44091\\
18	0.416036\\
19	0.417889\\
20	0.401902\\
21	0.413939\\
22	0.403971\\
23	0.419039\\
24	0.457426\\
};
\addlegendentry{Close-to-optimal}

%
		
	\end{axis}
\end{tikzpicture}%
	\caption{Runtime in seconds against $\gamma$.}
	\label{fig:Time_vs_gamma}
\end{figure}

\section{Derivation of the Algorithm}
\label{sec:derive}
In this section, we give the detailed derivations for Algorithm~\ref{alg:MPS}. Starting with an initialization, let the selected support and the linear receiver updated in iteration $t$ be denoted by $\setS\itr{t}$ and $\bmm\itr{t}$, respectively. Given that $\setS\itr{t}$ and $\bmm\itr{t}$ do not lead to a feasible point for \eqref{eq:E_func}, in this iteration, i.e., iteration $t$, we intend to select a device to be eliminated from $\setS\itr{t}$ and update the linear receiver.

Let us denote the index of the device that has been selected at the end of iteration $t-1$ for this iteration by $i\itr{t}$. Considering Algorithm~\ref{alg:MPS}, at the beginning of the iteration, we update the support by removing device $i\itr{t}$, i.e., 
\begin{align}
	\setS\itr{t} = \setS\itr{t-1} - \set{i\itr{t}}.
\end{align}
In the sequel, we derive the update rule for the linear receiver $\bmm\itr{t}$ and the new index $i\itr{t+1}$.

\subsection{Updating the Linear Receiver}
We first update the linear receiver for the given support, i.e., we find $\bmm\itr{t}$. To this end, let us define the constraint function for device $k\in\setS\itr{t}$ as
\begin{align}
	f_k\brc{\bmm} = \phi_k^2 {\norm{\bmm }^2} - \gamma {\abs{\bh_k^\her \bmm }^2}.
\end{align}
We desire that for all $k\in\setS\itr{t}$, we have $f_k\brc{\bmm} \leq 0 $. A necessary, but \textit{not sufficient}, condition for this objective is that a weighted sum of these constraint functions with positive coefficients be negative. We use this fact, and define an \textit{average} constraint function in iteration $t$ as
\begin{align}
	F\itr{t}\brc{\bmm} = \sum_{k \in \setS\itr{t} } w_k\itr{t} f_k\brc{\bmm} 
\end{align}
for some $w_k\itr{t}  > 0 $ for all $k \in \setS\brc{t}$.

From a probabilistic viewpoint, a desired behavior is to have $F\itr{t}\brc{\bmm}$ be as negative as possible\footnote{Note that these statements only follow from necessary conditions and hence are \textit{heuristic}.}. We hence update $\bmm\itr{t}$ as
\begin{align}
	\bmm\itr{t} &= \argmin_{\bmm} \; F\itr{t}\brc{\bmm} \label{eq:c1}\\
	&=  \argmin_{\bmm} \;  \sum_{k \in \setS\itr{t} } {w_k\itr{t}} \brc{\phi_k^2 {\norm{\bmm }^2}  - \gamma {\abs{\bh_k^\her \bmm }^2}}\\
	&=  \argmin_{\bmm} \; \bmm^\her \brc{\Phi\itr{t} \mI_N  - \gamma \mH \mW\itr{t} \mH^\her } \bmm 
\end{align}
where we define the weighted average of global weights as
\begin{align}
	\Phi\itr{t}  =  \sum_{k \in \setS\itr{t} } w_k\itr{t} \phi_k^2,
\end{align}
and the diagonal matrix $\mW\itr{t} \in \setR_+^{K\times K}$ as
\begin{align}
	\mW\itr{t} = \begin{cases}
		0 & k \notin  \setS\brc{t}\\
		w_k\itr{t} & k \in  \setS\brc{t}
	\end{cases}.
\end{align}

The solution to this latter optimization is readily given by the \ac{svd} considering the known bounds on the Rayleigh quotient \cite{horn2012matrix}. In fact, $F\itr{t}\brc{\bmm}$ is simply minimized by setting $\bmm$ to be the eigenvector corresponding to the minimum eigenvalue of $\Phi\itr{t} \mI_N  - \gamma \mH \mW\itr{t} \mH^\her$. Let us denote the \ac{svd} of $\mH \sqrt{\mW\itr{t}}$ as
\begin{align}
	\mH \sqrt{\mW\itr{t}} = \mV\itr{t} \mSigma\itr{t} \mU^{\brc{t} \her}
\end{align}
where $\mV\itr{t}\in \setC^{N\times N}$ and $\mU\itr{t}\in\setC^{K\times K}$ are unitary matrices and $\mSigma\itr{t}\in \setR_+^{N\times K}$ denotes the matrix of singular values, i.e., 
\begin{align}
	\mSigma\itr{t} = \dbc{\Diag{\sqrt{\sigma_1\itr{t}}, \ldots,\sqrt{ \sigma_N\itr{t}} } \left\vert \boldsymbol{0}_{N\times \brc{N-K}} \right. }
\end{align}
with $\sigma_n\itr{t}$ denoting the squared singular value of $\mH \sqrt{\mW\itr{t}}$. Define the index of the maximum squared singular value of $\mH \sqrt{\mW\itr{t}}$ as
\begin{align}
	n\itr{t} = \argmax_{n\in\dbc{N} } \; \sigma_n\itr{t}
\end{align}
and let the maximum value be 
\begin{align}
	\rho\itr{t} = \max_{n\in\dbc{N} } \; \sigma_{n} \itr{t}.
\end{align}
Moreover, let $\bvv\itr{t}$ and $\buu\itr{t}$ be the $n\itr{t}$-th columns of $\mV\itr{t}$ and $\mU\itr{t}$, respectively. We then have
\begin{align}
	\min_{\bmm} \; \bmm^\her \brc{\Phi\itr{t} \mI_N  - \gamma \mH \mW\itr{t} \mH^\her } \bmm  = \Phi\itr{t} - \gamma \rho\itr{t}
\end{align}
which is given by setting
\begin{align}
	\bmm =  \bvv\itr{t}. \label{eq:c2}
\end{align}
We therefore update the linear receiver $\bmm\itr{t}$ as $\bmm\itr{t} = \bvv\itr{t}$.

\subsection{Finding the New Index}
After updating the linear receiver, we check whether the updated $\bmm\itr{t}$ results in satisfied constraints for all the devices or not. To this end, we determine the following maximal constraint function over the selected subset of devices: 
\begin{align}
	\Delta\itr{t} = \max_{k \in \setS\itr{t} } f_k\brc{\bmm\itr{t}}.
\end{align}
Noting that $\norm{\bmm\itr{t}}^2 = 1$, we can conclude that
\begin{align}
	\Delta\itr{t} = \max_{k \in \setS\itr{t} } \phi_k^2  - \gamma {\abs{\bh_k^\her \bmm\itr{t} }^2}.
\end{align}
This metric decides whether further iterations are needed or not. Namely, when $\Delta\itr{t} \leq 0$, we can conclude that the computation error lies below the maximum tolerable limit and hence the algorithm will stop. With $\Delta\itr{t} > 0$, the constraint on the computation error is not yet satisfied and a further iteration is required. 

In the case of $\Delta\itr{t} > 0$, we start the next iteration by removing device $i\itr{t+1}$, where
\begin{align}
	i\itr{t+1} &= \argmax_{k \in \setS\itr{t} } f_k\brc{\bmm\itr{t}}\\
	&= \argmax_{k \in \setS\itr{t} } \phi_k^2  - \gamma {\abs{\bh_k^\her \bmm\itr{t} }^2}\\
	&= \argmin_{k \in \setS\itr{t} } {\abs{\bh_k^\her \bmm\itr{t} }^2}.
\end{align}
From the update rule, one can observe that the selected device is the one whose relative inner product of its uplink channel with the current receiver is \textit{minimal}. This describes in fact a matching pursuit strategy, and hence is the reason behind the appellation. 

\section{Numerical Experiments}
\label{sec:NumRes}
In this section, we investigate the performance of the proposed algorithm through several numerical experiments. In this respect, we consider a simple multiuser network in which multiple edge devices are to learn the classification of images from a shared dataset via \ac{FL}. Using the standard dataset  released by the Canadian Institute for Advanced Research \cite{krizhevsky2009learning}, known as CIFAR-10, we investigate the performance of the proposed algorithm, and compare it with the benchmark, as well as the close-to-optimal algorithm. 

\subsection{Network Setting}
We consider a typical \ac{otaFL} setting consisting of a \ac{ps} and $K$ single-antenna wireless devices. The \ac{ps} is equipped with $N$ antenna elements arranged on a \ac{ULA}. We use a two-dimensional Cartesian coordinate system to describe the layout of the network\footnote{This is a rather accurate approximation assuming that the heights of the edge devices change within a relatively small range compared to the height of the \ac{ps}.}, which is depicted in Fig.~\ref{fig:layout}. The \ac{ps} is located at the origin while the wireless devices are uniformly placed within a ring with inner radius $R_\text{in}$ and outer radius $R_\text{out}$ around the \ac{ps}.

\begin{figure}[t]
	\centering
	\begin{tikzpicture}
\tikzset{mobile phone/.pic={
code={
\begin{scope}[line join=round,looseness=0.25, line cap=round,scale=0.07, every node/.style={scale=0.07}]
\begin{scope}
\clip [preaction={left color=blue!10, right color=blue!30}] 
  (1/2,-1) to [bend left] (0,10)
  to [bend left] ++(1,1) -- ++(0,2)
  arc (180:0:3/4 and 1/2) -- ++(0,-2)
  to [bend left]  ++(5,-2) coordinate (A) to [bend left] ++(-1/2,-11)
  to [bend left] ++(-1,-1) to [bend left] cycle;
\path [left color=blue!30, right color=blue!50]
  (A) to [bend left] ++(0,-11) to[bend left] ++(-3/2,-2)
  -- ++(0,12);
\path [fill=blue!20, draw=white, line width=0.01cm]
  (0,10) to [bend left] ++(1,1) -- ++(0,2)
  arc (180:0:3/4 and 1/2) -- ++(0,-2)
  to [bend left]  (A) to [bend left] ++(-3/2,-5/4)
  to [bend right] cycle;
\draw [line width=0.01cm, fill=white]
  (9/8,21/2) arc (180:360:5/8 and 3/8) --
  ++(0,2.5) arc (0:180:5/8 and 3/8) -- cycle;
\draw [line width=0.01cm, fill=white]
  (9/8,13) arc (180:360:5/8 and 3/8);
\fill [white, shift=(225:0.5)] 
  (1,17/2) to [bend left] ++(4,-7/4)
  to [bend left] ++(0,-7/2) to [bend left] ++(-4, 6/4)
  to [bend left] cycle;
\fill [black, shift=(225:0.25)] 
  (1,17/2) to [bend left] ++(4,-7/4)
  to [bend left] ++(0,-7/2) to [bend left] ++(-4, 6/4)
  to [bend left] cycle;
\shade [inner color=white, outer color=cyan!20] 
  (1,17/2) to [bend left] ++(4,-7/4)
  to [bend left] ++(0,-7/2) to [bend left] ++(-4, 6/4)
  to [bend left] cycle;
%
\end{scope}
\draw [line width=0.02cm] 
  (1/2,-1) to [bend left] (0,10)
  to [bend left] ++(1,1) -- ++(0,2)
  arc (180:0:3/4 and 1/2) -- ++(0,-2)
  to [bend left]  ++(5,-2) to [bend left] ++(-1/2,-11)
  to [bend left] ++(-1,-1) to [bend left] cycle;
\end{scope}%
}}}

\tikzset{eve/.pic={
		code={
			\begin{scope}[line join=round,looseness=0.25, line cap=round,scale=0.07, every node/.style={scale=0.07}]
				\begin{scope}
					\clip [preaction={left color=red!10, right color=red!30}] 
					(1/2,-1) to [bend left] (0,10)
					to [bend left] ++(1,1) -- ++(0,2)
					arc (180:0:3/4 and 1/2) -- ++(0,-2)
					to [bend left]  ++(5,-2) coordinate (A) to [bend left] ++(-1/2,-11)
					to [bend left] ++(-1,-1) to [bend left] cycle;
					\path [left color=red!30, right color=red!50]
					(A) to [bend left] ++(0,-11) to[bend left] ++(-3/2,-2)
					-- ++(0,12);
					\path [fill=red!20, draw=white, line width=0.01cm]
					(0,10) to [bend left] ++(1,1) -- ++(0,2)
					arc (180:0:3/4 and 1/2) -- ++(0,-2)
					to [bend left]  (A) to [bend left] ++(-3/2,-5/4)
					to [bend right] cycle;
					\draw [line width=0.01cm, fill=white]
					(9/8,21/2) arc (180:360:5/8 and 3/8) --
					++(0,2.5) arc (0:180:5/8 and 3/8) -- cycle;
					\draw [line width=0.01cm, fill=white]
					(9/8,13) arc (180:360:5/8 and 3/8);
					\fill [white, shift=(225:0.5)] 
					(1,17/2) to [bend left] ++(4,-7/4)
					to [bend left] ++(0,-7/2) to [bend left] ++(-4, 6/4)
					to [bend left] cycle;
					\fill [black, shift=(225:0.25)] 
					(1,17/2) to [bend left] ++(4,-7/4)
					to [bend left] ++(0,-7/2) to [bend left] ++(-4, 6/4)
					to [bend left] cycle;
					\shade [inner color=white, outer color=orange!20] 
					(1,17/2) to [bend left] ++(4,-7/4)
					to [bend left] ++(0,-7/2) to [bend left] ++(-4, 6/4)
					to [bend left] cycle;
					%
				\end{scope}
				\draw [line width=0.02cm] 
				(1/2,-1) to [bend left] (0,10)
				to [bend left] ++(1,1) -- ++(0,2)
				arc (180:0:3/4 and 1/2) -- ++(0,-2)
				to [bend left]  ++(5,-2) to [bend left] ++(-1/2,-11)
				to [bend left] ++(-1,-1) to [bend left] cycle;
			\end{scope}%
}}}

\tikzset{radiation/.style={{decorate,decoration={expanding waves,angle=90,segment length=4pt}}},
	antenna/.pic={
		code={\tikzset{scale=3/10}
			\draw[semithick] (0,0) -- (1,4);
			\draw[semithick] (3,0) -- (2,4);
			\draw[semithick] (0,0) arc (180:0:1.5 and -0.5);
			\node[inner sep=4pt] (circ) at (1.5,5.5) {};
			\draw[semithick] (1.5,5.5) circle(8pt);
			\draw[semithick] (1.5,5.5cm-8pt) -- (1.5,4);
			\draw[semithick] (1.5,4) ellipse (0.5 and 0.166);
			\draw[semithick,radiation,decoration={angle=45}] (1.5cm+8pt,5.5) -- +(0:2);
			\draw[semithick,radiation,decoration={angle=45}] (1.5cm-8pt,5.5) -- +(180:2);
	}}
}

	\path (0,-.8) pic {antenna};
	\node (ps) at (0.5,-1.2) {PS};
	\draw[dashed] (0.5,0.7) circle[radius=3cm];
	\draw[dashed] (0.5,0.7) circle[radius=5.8cm];
	
	\path (3,-2.8) pic {mobile phone};
	\path (-3.1,-2.7) pic {mobile phone};
	\path (-3.2,3.7) pic {mobile phone};
	\path (4,2.1) pic {mobile phone};
	\path (2.5,4) pic {mobile phone};
	\path (-3.8,1.1) pic {mobile phone};
	\node (device) at (-2.8,-3.15) {\textcolor{blue!90}{Device $k$}};

	\draw [->,>=stealth,dashed] (.42,0.94) -- (-.5,3.5) node [above, sloped,pos=.7] (d) {$R_{\rm in}$};
	\draw [->,>=stealth,dashed] (.49,0.94) -- (1.5,6.35) node [above, sloped,pos=.7] (d) {$R_{\rm out}$};
	\draw [->,>=stealth,thick] (-2.4,-2) -- (-.25,.45) node [above, sloped,pos=.5] (d) {$\bh_k$};

\end{tikzpicture}
	\caption{Layout of the \ac{FL} system with one \ac{ps} and $K$ randomly distributed wireless devices. }
	\label{fig:layout}
\end{figure}
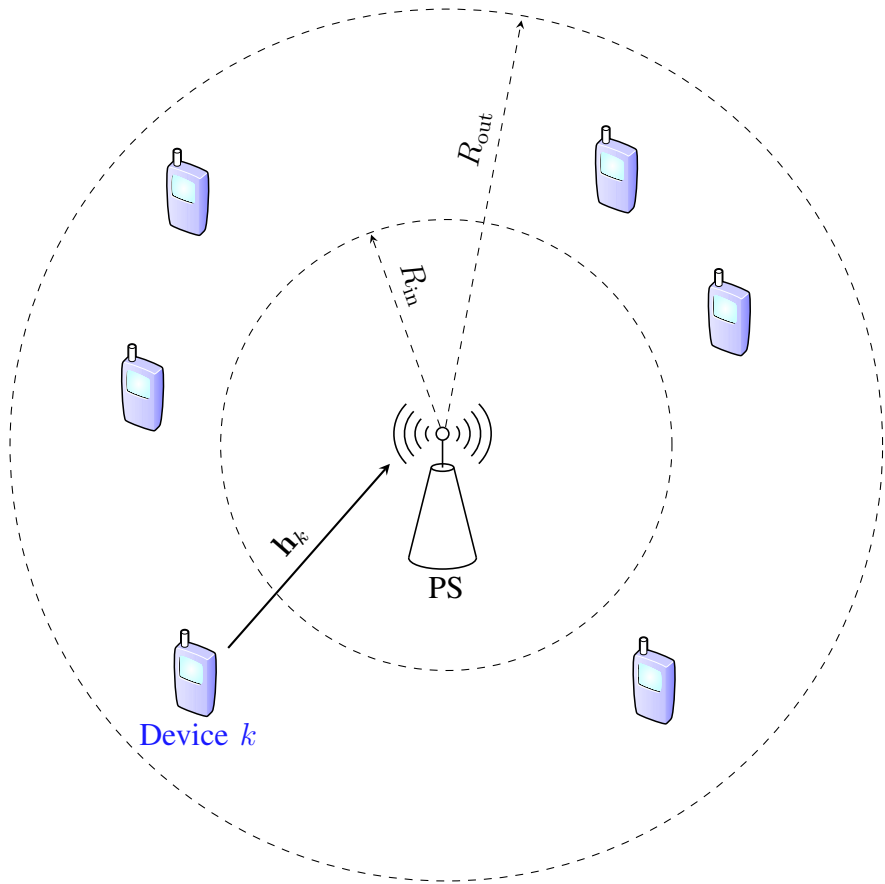

The channel model accounts for both large-scale and small-scale fading. Assuming that a direct \ac{los} path exists between the \ac{ps} and the wireless devices, a Rician distribution is used to model the small-scale fading. More specifically, the vector of uplink channel coefficients between device $k$ and the \ac{ps} is described by
\begin{align}
	\bh_k = \sqrt{\text{PL}(l_k)} \bgg_k,
\end{align}
where $\text{PL}(l_k)$ denotes the large-scale path-loss of device $k$ and is a function of its distance from the \ac{ps}, i.e., $l_k$, and $\bgg_k \in \setC^{N }$ describes the small-scale fading process in the channel and is given by
\begin{align}
	\bgg_k = \sqrt{\frac{\kappa_k}{1+\kappa_k}} \bar{\bgg}_k + \sqrt{\frac{1}{1+\kappa_k}} \tilde{\bgg}_k. \label{channel:small}
\end{align} 
In \eqref{channel:small}, the non-negative scalar $\kappa_k$ is the Rician factor, and $\bar{\bgg}_k $ and $\tilde{\bgg}_k $ denote the \ac{los} and \ac{nlos} component, respectively. Considering isotropic radiation in a rich scattering environment, the \ac{los} term can be written in terms of the specular array response at the \ac{ps} by \cite[Chapter~1]{SIG-093}
\begin{align}
	\bar{\bgg}_k = \dbc{ 1 , u\brc{\theta_{k}} , u^2\brc{\theta_{k}}  \dots , u^{N-1}\brc{\theta_{k}} }^\trp, 
\end{align} 
where $u\brc{\theta_{k}}$ is defined as
\begin{align}
	u\brc{\theta_{k}} = \exp\set{j2\pi d \sin\brc{\theta_k}}
\end{align}
with $\theta_k$ being the (azimuth) \ac{AoA} at the \ac{ps} from the $k$-th device, and $d$ being the distance between two neighboring antenna elements on the \ac{ULA} normalized by the wavelength. 

The \ac{nlos} component follows the Rayleigh fading model: $\bar{\bgg}_k $ is modeled by an \ac{iid} complex Gaussian process with zero mean and covariance matrix $\mR_k \in \setC^{N\times N}$, i.e., $\tilde{\bgg}_k \sim \mathcal{CN}\brc{\boldsymbol{0}, \mR_k}$. Assuming the number of impinging plane waves superposed at the \ac{ps} to be large, we set the elements of the covariance matrix to be \cite[Chapter~2]{SIG-093}
\begin{align}
	\dbc{\mR_k}_{n,m} = u^{n-m}\brc{\theta_{k}} \varrho_{n,m}^{\brc{k}}\brc{\theta_{k}},
\end{align}
where $\varrho_{n,m}\brc{\theta_{k}}$ accounts for the angular spread of the \ac{AoA} at the \ac{ps} from device $k$, and is given by
\begin{align}
	\varrho_{n,m}^{\brc{k}} \brc{\theta_{k}} = \exp\set{-2\varsigma_k^2 \dbc{\pi \brc{n-m} d \cos\brc{\theta_k}}^2}
\end{align}
with $\varsigma_k$ denoting the angular standard deviation of device $k$. Throughout the simulations, the nominal \acp{AoA} of the devices are determined geometrically from the randomly-generated location of the devices. The angular standard deviation of each device is further chosen uniformly at random from the interval $\dbc{12,15}$.

 The distance-dependent path loss model is further given by 
\begin{align}
	\text{PL}\brc{l_k} = \text{PL}_0 \left( \frac{l_k}{l_0}\right)^{-\alpha}, 
\end{align}
where $l_k$ denotes the distance between the \ac{ps} and the $k$-th user, $\text{PL}_0$ denotes the path loss value at the reference distance $l_0$, and $\alpha$ is the path loss exponent. Without loss of generality, throughout the simulations, we set $l_0 = l_{\min}$, i.e., the distance of the closest device to the \ac{ps}, and $\text{PL}_0 = 1$.

Throughout the  simulations, we set the numerical values of the communication network parameters as presented in Table \ref{tab:simulation_params}. 
\begin{table}[t]
	\normalsize
	\centering
	\begin{tabular}{| c | c | } 
		\hline
		Parameter & Value  \\ [0.5ex] 
		\hline\hline
		Inner radius $R_\text{in}$ in meters & 10  \\
		\hline
		Outer radius $R_\text{out}$ in meters & 100  \\
		\hline
		$\log \kappa_k$ in [dB] & $3$ \\
		\hline
		Antenna elements spacing $d$ (per wavelength) & $0.5$ \\
		\hline
		Path loss exponent $\alpha$ & 3 \\
		\hline
	\end{tabular}
\caption{Simulation parameters for the communication network.}
	\label{tab:simulation_params}
\end{table}

\subsection{Learning Setup}
We consider a $10$-class image classification task on the standard CIFAR-10 dataset. This dataset contains $60 000$ images, from which $L = 50 000$ are used as the training data and the remaining $10 000$ images as the test data. The overall dataset is partitioned into $K$ subsets and shared among the edge devices. 
The training data points are \textit{heterogeneously} distributed among the edge devices according to the following asymmetric approach: for a randomly-selected half of devices, $\tilde{L}_k$ is uniformly drawn from the interval  
		\begin{align}
		\dbc{L/K  , L/K + \epsilon_1}.
	\end{align} 
	for some non-negative real-valued $\epsilon_1$. For the remaining half, $\tilde{L}_k$ is uniformly drawn from the interval $\dbc{\epsilon_0, \epsilon_1}$ for some $\epsilon_0 < \epsilon_1$. The number of training images at device $k$ is then set to
	\begin{align}
		L_k = \tilde{L}_k + \left\lfloor \frac{1}{K} \brc{L-\sum_{k=1}^K \tilde{L}_k } \right\rfloor
	\end{align}
	This allows for a high variation among the local datasets sizes, i.e., quantity skew. The local datasets are further collected by an asymmetric sampling scheme, such that we observe label skew across the local datasets \cite{li2022federated}.


To learn the image classification task, a \ac{CNN} is locally trained at each of the selected wireless devices. The employed \ac{CNN} is designed as a simplified version of the widely known VGG13 network\footnote{Named after the visual geometry group at University of Oxford.} \cite{simonyan2015very}. Specifically, its architecture consists of eight convolutional blocks and two subsequent fully-connected layers. A convolutional block refers to the collection of a convolutional layer, a batch-normalization layer, and a rectified linear unit activation function. The convolutional filter size is fixed to $3 \times 3$, whereas the number of filters for the eight convolutional layers are set to $32$, $32$, $64$, $64$, $128$, $128$, $256$, and $256$, respectively. Moreover, a max-pooling operation with a filter size of $2$ and stride of $2$ is performed after each two convolutional blocks. Finally, we set the number of output nodes for the first fully-connected layer to $512$, which is followed by a dropout layer in order to avoid overfitting during training. All local \ac{CNN} models are trained using the \ac{sgd} with momentum algorithm to minimize a cross-entropy loss function. The training starts with an initial learning rate of $0.005$. The learning rate is eventually adjusted according to a \textit{reduce on plateau} strategy, which halves the learning rate if the test loss has not been improving over the last ten epochs. 

Throughout the simulations of the \ac{FL} process, we consider a time-varying Rician fading channel. We assume that the channel changes from one communication round to another, but it remains constant during the learning epochs within one round. We set the number of epochs per round to $10$ and the total number of communication rounds to $T = 6$. The scalars $\epsilon_0$ and $\epsilon_1$ are further set to $\epsilon_0 = 300$ and $\epsilon_1 = 500$. Other relevant parameters for training the CNN image classifier are shown in Table \ref{tab:learning_params}. 

\begin{table}[t]
	\normalsize
	\centering
	\begin{tabular}{| c | c |} 
		\hline
		Parameter & Value  \\ [0.5ex] 
		\hline\hline
		Number of train images $D$ & 50 000  \\ 
		\hline
		Number of test images & 10 000  \\
		\hline
		Number of epochs  & 60 \\
		\hline
		Number of communication rounds $T$ & 6  \\
		\hline
		Dropout probability & 0.5  \\
		\hline
		Learning rate $\beta$ & 0.005 \\
		\hline
		Batch size & 32 \\
		\hline
		Momentum & 0.9 \\
		\hline
	\end{tabular}
	\caption{Simulation parameters for the learning setting.}
	\label{tab:learning_params}
\end{table}

\subsection{Simulation Results}
We start the numerical investigations by giving the learning performance figures for Algorithm~1, as well as the closed-to-optimal and benchmark policies, in Figs.~\ref{fig:Training_Loss_vs_Epoch}, \ref{fig:Loss_vs_Epoch} and \ref{fig:Accuracy_vs_Epoch}. For these figures, we set the number of devices to $K=20$ and the size of the \ac{ps} \ac{ULA} to $N=6$. Fig.~\ref{fig:Training_Loss_vs_Epoch} shows the training loss versus the number of epochs averaged over multiple channel realizations considering both approaches. The test loss over the test dataset is further shown in Fig.~\ref{fig:Loss_vs_Epoch}. As a reference, we further plot the test loss achieved by \textit{perfect federated averaging}. The simulation results for this case is denoted as \textit{perfect \ac{FL}} in the figures representing the scenario in which all the devices are participating in the learning and communicate over a noiseless network. Perfect \ac{FL} hence gives a lower bound for training and test loss, and an upper bound for learning accuracy. The remaining parameters in the setting are selected from Tables~\ref{tab:simulation_params} and \ref{tab:learning_params} and the internal parameters of the algorithms are numerically optimized.

\begin{figure}
	\centering
%
%
\begin{tikzpicture}
	
	\begin{axis}[%
		width=4.8in,
		height=3.4in,
		at={(1.262in,0.697in)},
		scale only axis,
		xmin=0,
		xmax=61,
		xlabel style={font=\color{white!15!black}},
		xtick={10,20,30,40,50,60},
		xticklabels={{$10$},{$20$},{$30$},{$40$},{$50$},{$60$}},
		xlabel={epoch},
		ymin=0.02,
		ymax=2.45,
		ytick={0.5,1,1.5,2},
		yticklabels={{$0.5$},{$1$},{$1.5$},{$2$}},
		ylabel style={font=\color{white!15!black}},
		ylabel={Training Loss},
		axis background/.style={fill=none},
		legend style={legend cell align=left, align=left, draw=white!15!black}
		]

\addplot [color=black, line width=1.0pt, mark=o, mark options={thin, black}]
table[row sep=crcr]{%
1	2.32230086422722\\
2	1.63070321744675\\
3	1.42054657902625\\
4	1.307955432065\\
5	1.25793376063149\\
6	1.13259408972462\\
7	1.0698049875121\\
8	1.03068665481896\\
9	0.962039357769938\\
10	0.971727506631449\\
11	0.918412802784111\\
12	0.850425310426117\\
13	0.835799513185308\\
14	0.797238954566625\\
15	0.771436781097082\\
16	0.763393364915634\\
17	0.763145563063855\\
18	0.749483822449805\\
19	0.717147413484735\\
20	0.740830818507415\\
21	0.795675007022302\\
22	0.619051083339708\\
23	0.655282986781935\\
24	0.626201262828571\\
25	0.636596460049997\\
26	0.626834644396453\\
27	0.62631241699877\\
28	0.622837282890679\\
29	0.648724938140128\\
30	0.662692213712294\\
31	0.738821901340747\\
32	0.506077837320178\\
33	0.481235978511098\\
34	0.510938363328592\\
35	0.47335078977785\\
36	0.423401641814607\\
37	0.409350327336531\\
38	0.447219190505664\\
39	0.434340270681306\\
40	0.367235315752188\\
41	0.508527989227759\\
42	0.381381677193485\\
43	0.371062955680692\\
44	0.320350661486482\\
45	0.30122301928422\\
46	0.290443095184032\\
47	0.292547935847024\\
48	0.281333840028801\\
49	0.255582011324137\\
50	0.249835618879708\\
51	0.302321283910372\\
52	0.287951948622507\\
53	0.260749753665675\\
54	0.227909172485249\\
55	0.214036403396127\\
56	0.205003004671593\\
57	0.203106512729773\\
58	0.199405258716992\\
59	0.194980828544351\\
60	0.203391611162093\\
};
\addlegendentry{Algorithm~1}

\addplot [dashed,color=black, line width=1.0pt, mark=square, mark options={solid,thin, black}]
table[row sep=crcr]{%
1	2.23371168141479\\
2	1.56602148958878\\
3	1.44425891757145\\
4	1.33688718207588\\
5	1.29146207527951\\
6	1.18229405734122\\
7	1.09411403984678\\
8	1.02249891597808\\
9	1.00977382793559\\
10	0.975484384999005\\
11	0.884620264082975\\
12	0.860899282019633\\
13	0.815705031539916\\
14	0.776251281051934\\
15	0.762206220568393\\
16	0.746950287688975\\
17	0.736536877548282\\
18	0.708277897937623\\
19	0.692523536059337\\
20	0.687686344336248\\
21	0.692981918793985\\
22	0.630397529423752\\
23	0.649753138543298\\
24	0.635108695822418\\
25	0.676233895363993\\
26	0.597657026049508\\
27	0.60242312469563\\
28	0.588329562330546\\
29	0.556153431892646\\
30	0.552095240562925\\
31	0.565497427092058\\
32	0.514818013281758\\
33	0.43572587568718\\
34	0.426438038160923\\
35	0.416881883928778\\
36	0.440098621824861\\
37	0.401034981789137\\
38	0.374378633723769\\
39	0.345965505973984\\
40	0.34043312235982\\
41	0.403006797177156\\
42	0.34149058662127\\
43	0.293709934191031\\
44	0.281195395905638\\
45	0.241590300141487\\
46	0.240138879592101\\
47	0.257471676047311\\
48	0.215516511397964\\
49	0.222331943912578\\
50	0.20101939140825\\
51	0.254038319699437\\
52	0.238401933563133\\
53	0.194343842096429\\
54	0.177118291375535\\
55	0.164190690566362\\
56	0.161533874445664\\
57	0.153473960296989\\
58	0.143226468155771\\
59	0.149129316816343\\
60	0.16382514884684\\
};
\addlegendentry{Close-to-optimal}

\addplot [dotted,color=black, line width=1.0pt, mark=x, mark options={solid,thin, black}]
table[row sep=crcr]{%
1	2.25324232267214\\	
2	1.68463453660821\\	
3	1.52610782248699\\	
4	1.44084103780311\\	
5	1.35561831352349\\	
6	1.21233205474569\\	
7	1.17019495819473\\	
8	1.14167856312951\\	
9	1.14393414496331\\	
10	1.13865745741194\\	
11	1.2081817899147\\	
12	1.11114108974901\\	
13	1.19124392630542\\	
14	1.04305347061938\\	
15	1.11551883783146\\	
16	1.33544270124272\\	
17	1.27633632444559\\	
18	1.20699154740858\\	
19	1.25184496308238\\	
20	1.35105368459229\\	
21	1.26831433242596\\	
22	1.12054136543474\\	
23	1.09523821067464\\	
24	1.09911136332467\\	
25	1.35282905816552\\	
26	1.24182349316875\\	
27	1.46961453728425\\	
28	1.29293482251254\\	
29	1.29233262300861\\	
30	1.46861708058164\\	
31	1.08701178760607\\	
32	0.848167754302222\\	
33	0.882751672840585\\	
34	0.854196678371576\\	
35	0.830425344208666\\	
36	0.85780151615804\\	
37	0.889448762136807\\	
38	0.860649074819689\\	
39	0.847773727340308\\	
40	0.889777690433865\\	
41	0.969122843394747\\	
42	0.902158542553289\\	
43	1.02639918674946\\	
44	0.937189680094358\\	
45	0.923566613000551\\	
46	0.988284094363098\\	
47	1.04661787131053\\	
48	1.06624593376253\\	
49	0.998285820880193\\	
50	0.939226187286264\\	
51	0.96310594387819\\	
52	0.978360761087441\\	
53	0.940380798123436\\	
54	0.892544230832781\\	
55	1.03371181823058\\	
56	0.989898923108487\\	
57	0.934835118897229\\	
58	1.00822770455298\\	
59	1.00463769418272\\	
60	1.10742970275356\\	
};
\addlegendentry{Benchmark}

\addplot [dashdotdotted,color=black, line width=1.0pt]
table[row sep=crcr]{%
1	2.346172947\\
2	1.514006158\\
3	1.270416092\\
4	1.136136781\\
5	1.02962329\\
6	0.949405265\\
7	0.880000201\\
8	0.825641252\\
9	0.775108407\\
10	0.732911014\\
11	0.691830916\\
12	0.65760105\\
13	0.628139032\\
14	0.599246409\\
15	0.572571988\\
16	0.549266547\\
17	0.524959276\\
18	0.505531418\\
19	0.487350442\\
20	0.466033347\\
21	0.449172051\\
22	0.43383473\\
23	0.418800741\\
24	0.402336252\\
25	0.385773195\\
26	0.377128571\\
27	0.362236926\\
28	0.349182692\\
29	0.331243728\\
30	0.328231242\\
31	0.307848428\\
32	0.263258433\\
33	0.234962726\\
34	0.221564889\\
35	0.195920458\\
36	0.167823431\\
37	0.15283691\\
38	0.140936088\\
39	0.129376543\\
40	0.122845829\\
41	0.113562882\\
42	0.106834513\\
43	0.091876874\\
44	0.084305966\\
45	0.077778881\\
46	0.072517026\\
47	0.0681783\\
48	0.065015566\\
49	0.062264692\\
50	0.059808633\\
51	0.057732803\\
52	0.055878355\\
53	0.054205854\\
54	0.052727049\\
55	0.051301304\\
56	0.050268708\\
57	0.049405634\\
58	0.048628658\\
59	0.047938004\\
60	0.047295139\\
};
\addlegendentry{Perfect FL}
		
\end{axis}
\end{tikzpicture}%
	\caption{Training loss against the number of training epochs.}
	\label{fig:Training_Loss_vs_Epoch}
\end{figure}
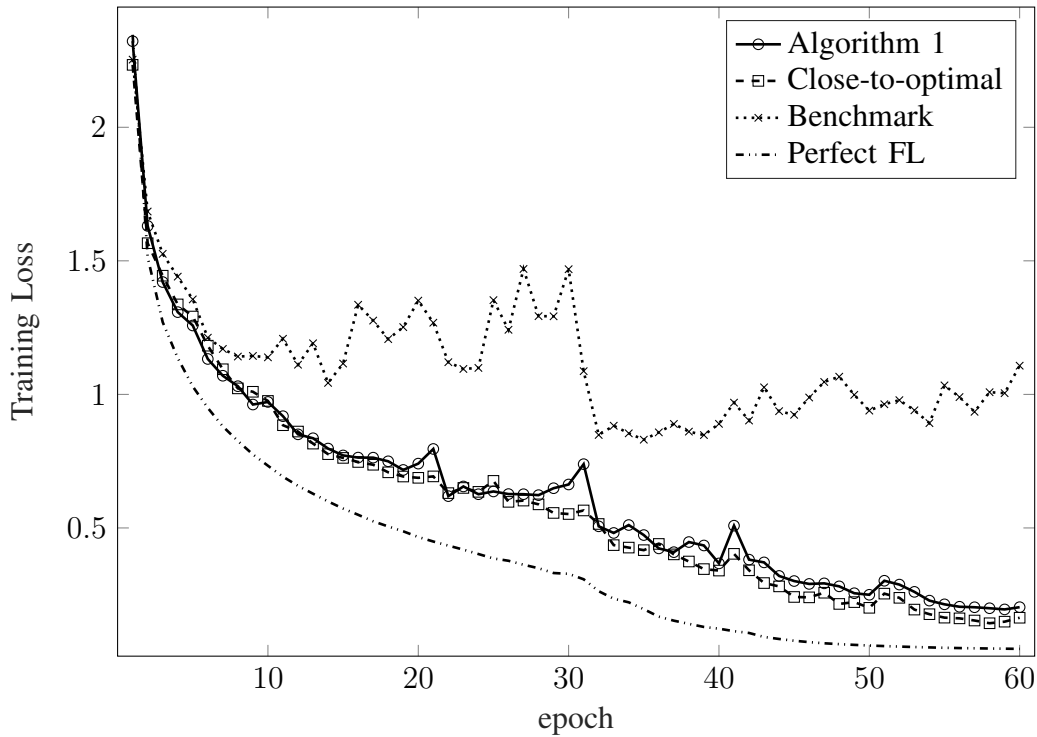

\begin{figure}
	\centering
%
%
\begin{tikzpicture}
	
	\begin{axis}[%
		width=4.8in,
		height=3.4in,
		at={(1.262in,0.697in)},
		scale only axis,
		xmin=0,
		xmax=61,
		xlabel style={font=\color{white!15!black}},
		xtick={10,20,30,40,50,60},
		xticklabels={{$10$},{$20$},{$30$},{$40$},{$50$},{$60$}},
		xlabel={epoch},
		ymin=0.02,
		ymax=2.45,
		ytick={0.5,1,1.5,2},
		yticklabels={{$0.5$},{$1$},{$1.5$},{$2$}},
		ylabel style={font=\color{white!15!black}},
		ylabel={Test Loss},
		axis background/.style={fill=none},
		legend style={legend cell align=left, align=left, draw=white!15!black}
		]

\addplot [color=black, line width=1.0pt, mark=o, mark options={thin, black}]
table[row sep=crcr]{%
1	2.32169438544318\\
2	1.63183533437342\\
3	1.42735402240313\\
4	1.32341711301145\\
5	1.28140361356816\\
6	1.16469363789979\\
7	1.11155740486013\\
8	1.07944004912671\\
9	1.02171503533181\\
10	1.04068482347863\\
11	0.989142293989253\\
12	0.932746541224343\\
13	0.927555446097438\\
14	0.902544358056739\\
15	0.891381850703467\\
16	0.894311478531523\\
17	0.907459040636198\\
18	0.908667229807122\\
19	0.888089723208045\\
20	0.926469421465036\\
21	0.947372084162194\\
22	0.793036458748866\\
23	0.845967637896642\\
24	0.835084396580474\\
25	0.864294554676777\\
26	0.872787279187315\\
27	0.888655170455923\\
28	0.901181328843395\\
29	0.945919326800767\\
30	0.974693721563121\\
31	0.988761020868311\\
32	0.77720356219903\\
33	0.780067144149466\\
34	0.841201342771807\\
35	0.82667773872905\\
36	0.808377571734001\\
37	0.826208387321713\\
38	0.889061889716824\\
39	0.897063777000282\\
40	0.840209152844461\\
41	0.909250920870379\\
42	0.795072676887126\\
43	0.815649640249319\\
44	0.798110091769219\\
45	0.814366431133667\\
46	0.829323508272146\\
47	0.852027251088056\\
48	0.856792278388609\\
49	0.8468649513004\\
50	0.852040478236397\\
51	0.848836470946415\\
52	0.857590153185175\\
53	0.845778828977871\\
54	0.82663629566139\\
55	0.825534493505684\\
56	0.820872284721165\\
57	0.828631106186322\\
58	0.829915889112652\\
59	0.828522131410107\\
60	0.846028371304939\\
};
\addlegendentry{Algorithm~1}

\addplot [dashed,color=black, line width=1.0pt, mark=square, mark options={solid,thin, black}]
table[row sep=crcr]{%
1	2.23200880580669\\
2	1.56604435762151\\
3	1.44857939305447\\
4	1.34856755830925\\
5	1.31107283989541\\
6	1.20866425615563\\
7	1.13020091705995\\
8	1.06643656658397\\
9	1.0604868043896\\
10	1.03508769892853\\
11	0.954334290674646\\
12	0.940535763809257\\
13	0.904524238468956\\
14	0.87610565225998\\
15	0.872057081264653\\
16	0.867856564315752\\
17	0.869839026687219\\
18	0.851773792603236\\
19	0.848431142457395\\
20	0.855847874941896\\
21	0.864319033754005\\
22	0.812198683387971\\
23	0.844152594439471\\
24	0.839932288392949\\
25	0.895834284936704\\
26	0.826816343275948\\
27	0.846844553535761\\
28	0.849914336996637\\
29	0.831678709215659\\
30	0.849239267556549\\
31	0.83827976543396\\
32	0.810660787455612\\
33	0.763786314277546\\
34	0.778536267096479\\
35	0.787971735764727\\
36	0.844156829659588\\
37	0.816120361767943\\
38	0.819091218344424\\
39	0.806486913481093\\
40	0.813606461665134\\
41	0.844679766582946\\
42	0.830027816304044\\
43	0.796422935490009\\
44	0.804591399030513\\
45	0.771643039137641\\
46	0.791068592853397\\
47	0.822760192694222\\
48	0.792599898464127\\
49	0.809502330551808\\
50	0.795060544179745\\
51	0.848312225076807\\
52	0.853697171671653\\
53	0.81340017368059\\
54	0.810919750195477\\
55	0.804167708756882\\
56	0.806315997477034\\
57	0.807026388434572\\
58	0.796332661559729\\
59	0.812610604658215\\
60	0.835167992551087\\
};
\addlegendentry{Close-to-optimal}

\addplot [dotted,color=black, line width=1.0pt, mark=x, mark options={solid,thin, black}]
table[row sep=crcr]{%
1	2.25231563083464\\
2	1.68293373672892\\
3	1.52912759604119\\
4	1.44861853851212\\
5	1.37022091454025\\
6	1.23705753876175\\
7	1.2021139800215\\
8	1.18158258659948\\
9	1.18855125983175\\
10	1.19348129433193\\
11	1.25132589858951\\
12	1.16575951758239\\
13	1.25113561865714\\
14	1.10964354113054\\
15	1.19151576746592\\
16	1.42196732352195\\
17	1.36724259827456\\
18	1.3071574876819\\
19	1.35999682902438\\
20	1.47018486958036\\
21	1.35346557977472\\
22	1.21688433412126\\
23	1.19984352306651\\
24	1.21251454693463\\
25	1.47707853181233\\
26	1.37815608017349\\
27	1.6131517292986\\
28	1.45374396946356\\
29	1.46348042636135\\
30	1.63196345546185\\
31	1.19173995786377\\
32	0.977761725392818\\
33	1.02703102276914\\
34	1.02010506332177\\
35	1.0145984271201\\
36	1.05599493425072\\
37	1.10038261899182\\
38	1.08612675315203\\
39	1.08380082159372\\
40	1.14015929641088\\
41	1.13806861157028\\
42	1.07356993375766\\
43	1.20467171399228\\
44	1.12447524352519\\
45	1.11619467633761\\
46	1.18971761247258\\
47	1.24474045863006\\
48	1.27536196978063\\
49	1.21869592013193\\
50	1.15478346201608\\
51	1.14060976023955\\
52	1.17111483521858\\
53	1.13768749730312\\
54	1.10640411473363\\
55	1.25939419651541\\
56	1.21823734613145\\
57	1.16445624342235\\
58	1.24933180629989\\
59	1.24959860514659\\
60	1.35507237303637\\
};
\addlegendentry{Benchmark}

\addplot [dashdotdotted,color=black, line width=1.0pt]
table[row sep=crcr]{%
1	2.34429402490028\\
2	1.51759445077653\\
3	1.28127905090007\\
4	1.15490306443233\\
5	1.05786668352908\\
6	0.987319113383834\\
7	0.928235604949921\\
8	0.883608958432429\\
9	0.843722519193658\\
10	0.812128130119625\\
11	0.782310208424533\\
12	0.760217778617851\\
13	0.743251841136073\\
14	0.727556170350749\\
15	0.714245304331291\\
16	0.705182474987398\\
17	0.695625988769678\\
18	0.692070666961015\\
19	0.689693812569\\
20	0.683678724669474\\
21	0.685488668302123\\
22	0.687403815149439\\
23	0.689018945195723\\
24	0.689150830796346\\
25	0.690727702003057\\
26	0.69968126334738\\
27	0.703927341761007\\
28	0.705559272527174\\
29	0.703878809744716\\
30	0.718346326207218\\
31	0.712119303235887\\
32	0.683529466360347\\
33	0.67600357763139\\
34	0.685378710930688\\
35	0.6790569537848\\
36	0.672653176382768\\
37	0.682141663807171\\
38	0.691724105018087\\
39	0.700983479096079\\
40	0.713888210423671\\
41	0.72203674792926\\
42	0.731049285044094\\
43	0.729853974872199\\
44	0.737141803892594\\
45	0.742601984936773\\
46	0.747281824194328\\
47	0.751339405253641\\
48	0.756114987432804\\
49	0.760630120763855\\
50	0.764764991983174\\
51	0.768659803656958\\
52	0.772445611363623\\
53	0.776089368816197\\
54	0.779341398262135\\
55	0.781745118628629\\
56	0.784008701755442\\
57	0.786032603242051\\
58	0.787936343694055\\
59	0.789678981245163\\
60	0.791341617035793\\
};
\addlegendentry{Perfect FL}
		
\end{axis}
\end{tikzpicture}%
	\caption{Test loss against the number of training epochs.}
	\label{fig:Loss_vs_Epoch}
\end{figure}
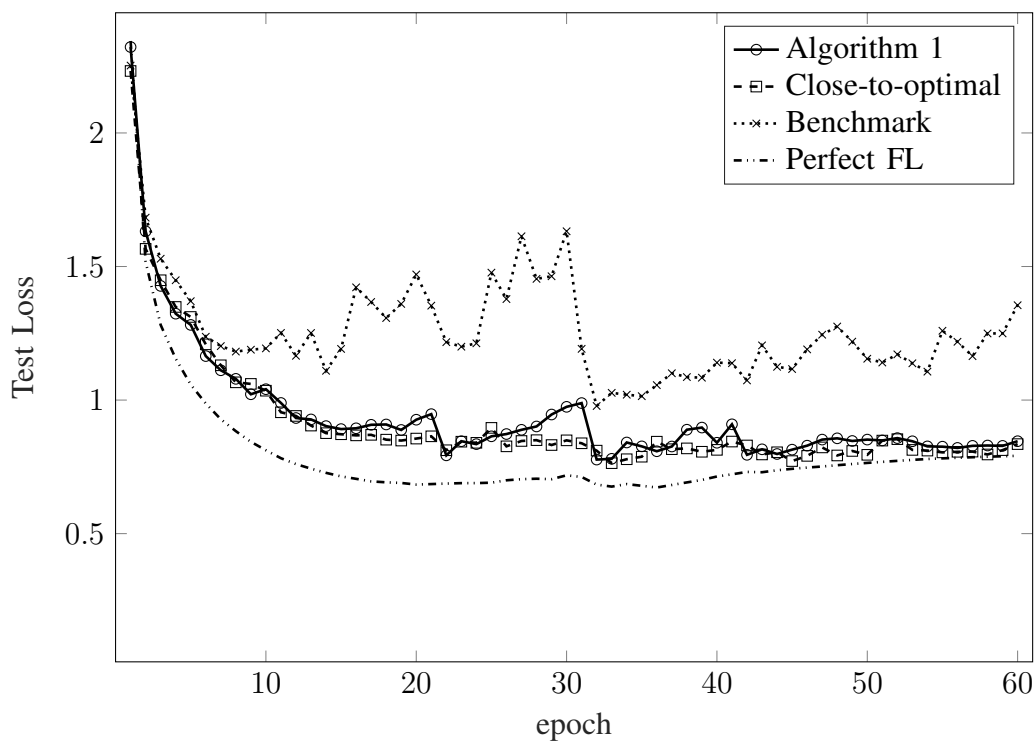

As Fig.~\ref{fig:Loss_vs_Epoch} shows, the test losses incurred by both the algorithms closely track the loss achieved by perfect \ac{FL} over a noise-less network while the benchmark algorithm performs considerably degraded. For this figure, the tolerable computation error is set such that $\log \gamma = 15$ dB. Compared with the close-to-optimal policy, the minor degradation observed in the performance of the proposed scheme is compensated for in terms of complexity. 

\begin{figure}
	\centering
%
%
\begin{tikzpicture}
	
	\begin{axis}[%
		width=4.8in,
		height=3.4in,
		at={(1.262in,0.697in)},
		scale only axis,
		xmin=0,
		xmax=61,
		xlabel style={font=\color{white!15!black}},
		xtick={10,20,30,40,50,60},
		xticklabels={{$10$},{$20$},{$30$},{$40$},{$50$},{$60$}},
		xlabel={epoch},
		ymin=10,
		ymax=89,
		ytick={20,40,60,80,100},
		yticklabels={{$20$},{$40$},{$60$},{$80$},{$100$}},
		ylabel style={font=\color{white!15!black}},
		ylabel={Accuracy in \%},
		axis background/.style={fill=none},
		legend style={at={(.98,.28)},legend cell align=left, align=left, draw=white!15!black}
		]

\addplot [color=black, line width=1.0pt, mark=o, mark options={thin, black}]
table[row sep=crcr]{%
1	13.2753333333333\\
2	39.6966666666667\\
3	48.1173333333333\\
4	52.252\\
5	54.7353333333333\\
6	58.4013333333333\\
7	60.6106666666667\\
8	61.9406666666667\\
9	64.0346666666667\\
10	63.9266666666667\\
11	65.1993333333333\\
12	67.1393333333333\\
13	67.54\\
14	68.658\\
15	69.3966666666667\\
16	69.6213333333333\\
17	69.5653333333333\\
18	69.8573333333333\\
19	70.5633333333333\\
20	69.9\\
21	68.646\\
22	73.4793333333333\\
23	72.1406666666667\\
24	72.9586666666667\\
25	72.6093333333333\\
26	72.7446666666667\\
27	72.866\\
28	72.6466666666667\\
29	72.184\\
30	72.2126666666667\\
31	70.3553333333333\\
32	74.9093333333333\\
33	75.316\\
34	74.526\\
35	75.0766666666667\\
36	76.094\\
37	76.302\\
38	75.1606666666667\\
39	75.5213333333333\\
40	76.9093333333333\\
41	73.634\\
42	76.228\\
43	76.124\\
44	77.1193333333333\\
45	77.394\\
46	77.6013333333333\\
47	77.398\\
48	77.7173333333333\\
49	78.1706666666667\\
50	78.2253333333333\\
51	77.248\\
52	77.1493333333333\\
53	77.66\\
54	78.3073333333333\\
55	78.4806666666667\\
56	78.694\\
57	78.6713333333333\\
58	78.726\\
59	78.7973333333333\\
60	78.6206666666667\\
};
\addlegendentry{Algorithm~1}

\addplot [dashed,color=black, line width=1.0pt, mark=square, mark options={solid,thin, black}]
table[row sep=crcr]{%
1	15.9546666666667\\
2	42.3593333333333\\
3	47.7473333333333\\
4	51.8246666666667\\
5	53.6753333333333\\
6	57.4886666666667\\
7	59.768\\
8	61.9833333333333\\
9	62.552\\
10	63.8806666666667\\
11	66.586\\
12	67.4526666666667\\
13	68.588\\
14	69.5026666666667\\
15	69.9066666666667\\
16	70.158\\
17	70.466\\
18	71.158\\
19	71.3346666666667\\
20	71.502\\
21	71.2393333333333\\
22	72.7566666666667\\
23	72.256\\
24	72.2826666666667\\
25	71.396\\
26	73.1973333333333\\
27	72.8793333333333\\
28	73.328\\
29	73.8453333333333\\
30	73.906\\
31	73.1706666666667\\
32	74.1033333333333\\
33	76.006\\
34	75.9866666666667\\
35	75.9206666666667\\
36	75.2393333333333\\
37	75.978\\
38	76.448\\
39	76.7706666666667\\
40	76.616\\
41	75.9833333333334\\
42	76.4933333333333\\
43	77.292\\
44	77.3446666666667\\
45	78.2906666666667\\
46	78.1206666666667\\
47	77.768\\
48	78.3826666666667\\
49	78.134\\
50	78.5706666666667\\
51	77.442\\
52	77.5413333333333\\
53	78.4166666666667\\
54	78.6606666666667\\
55	78.9166666666667\\
56	78.9686666666667\\
57	78.978\\
58	79.312\\
59	79.144\\
60	78.824\\
};
\addlegendentry{Close-to-optimal}

\addplot [dotted,color=black, line width=1.0pt, mark=x, mark options={solid,thin, black}]
table[row sep=crcr]{%
1	16.0126666666667\\
2	38.322\\
3	44.51\\
4	48.7946666666667\\
5	51.146\\
6	55.896\\
7	57.2686666666667\\
8	58.5193333333333\\
9	58.8833333333333\\
10	59.226\\
11	56.9886666666667\\
12	60.3626666666667\\
13	60.426\\
14	61.9466666666667\\
15	61.8046666666667\\
16	59.192\\
17	59.556\\
18	62.8326666666667\\
19	60.57\\
20	60.1053333333333\\
21	58.61\\
22	62.8613333333334\\
23	63.3526666666667\\
24	63.2173333333333\\
25	61.6273333333333\\
26	61.72\\
27	61.478\\
28	61.7966666666667\\
29	61.9533333333333\\
30	62.4213333333333\\
31	60.4806666666667\\
32	66.61\\
33	65.9713333333334\\
34	66.95\\
35	67.1726666666667\\
36	67.174\\
37	66.1193333333333\\
38	66.2706666666667\\
39	66.768\\
40	66.2726666666667\\
41	63.5966666666667\\
42	65.0966666666667\\
43	63.8633333333333\\
44	65.0733333333333\\
45	64.9746666666667\\
46	64.1886666666667\\
47	63.9253333333333\\
48	63.6233333333333\\
49	64.354\\
50	65.1126666666667\\
51	64.0253333333334\\
52	64.2453333333333\\
53	65.088\\
54	66.0133333333333\\
55	63.93\\
56	64.176\\
57	64.924\\
58	64.1586666666667\\
59	63.9646666666667\\
60	62.988\\
};
\addlegendentry{Benchmark}

\addplot [dashdotdotted,color=black, line width=1.0pt]
table[row sep=crcr]{%
1	14.895\\
2	44.334\\
3	53.813\\
4	58.58\\
5	62.072\\
6	64.56\\
7	66.808\\
8	68.48\\
9	70.015\\
10	71.237\\
11	72.442\\
12	73.283\\
13	74.007\\
14	74.726\\
15	75.346\\
16	75.722\\
17	76.238\\
18	76.48\\
19	76.745\\
20	77.147\\
21	77.264\\
22	77.513\\
23	77.595\\
24	77.89\\
25	78.001\\
26	78.09\\
27	78.209\\
28	78.433\\
29	78.553\\
30	78.482\\
31	78.764\\
32	79.379\\
33	79.609\\
34	79.645\\
35	79.936\\
36	80.219\\
37	80.275\\
38	80.367\\
39	80.372\\
40	80.404\\
41	80.399\\
42	80.364\\
43	80.597\\
44	80.642\\
45	80.64\\
46	80.649\\
47	80.664\\
48	80.653\\
49	80.662\\
50	80.682\\
51	80.712\\
52	80.694\\
53	80.703\\
54	80.735\\
55	80.73\\
56	80.735\\
57	80.74\\
58	80.755\\
59	80.758\\
60	80.76\\
};
\addlegendentry{Perfect FL}
		
	\end{axis}
\end{tikzpicture}%
	\caption{Accuracy of classification over the test dataset against the number of training epochs.}
	\label{fig:Accuracy_vs_Epoch}
\end{figure}
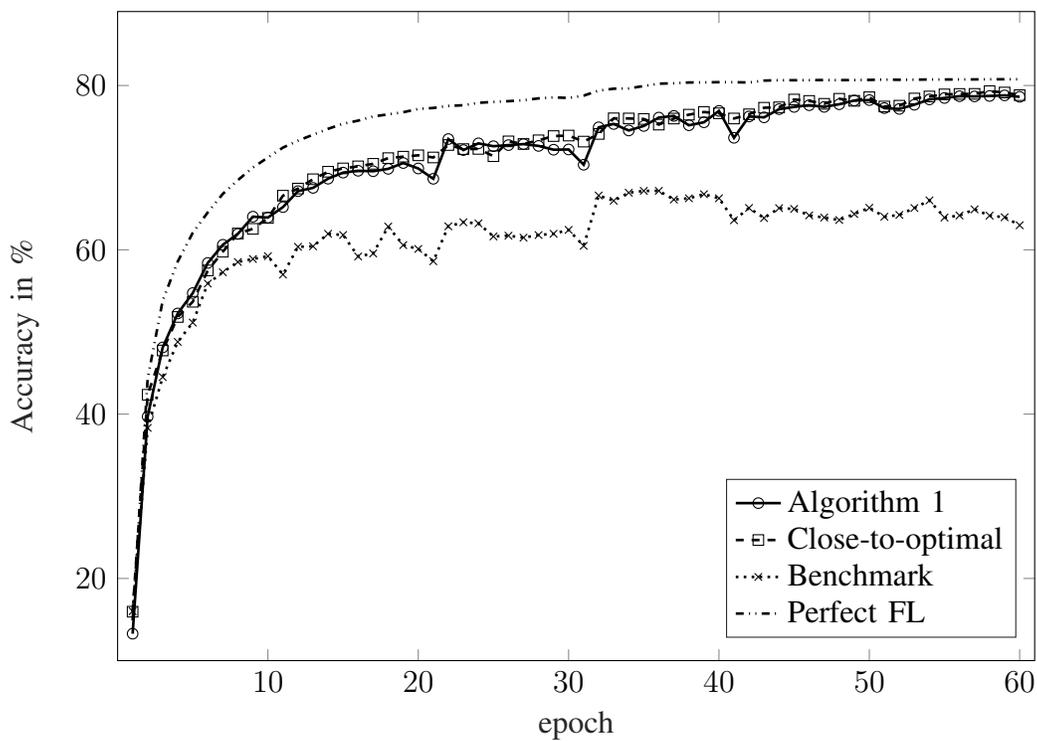

Fig.~\ref{fig:Accuracy_vs_Epoch} shows the accuracy of the trained machines in classifying the test dataset. As expected from the behavior of the test loss, the matching pursuit scheme and the close-to-optimal policy closely track the upper-bound achieved by perfect \ac{FL} while the benchmark algorithm significantly performs degraded. This observation can be illustrated as follows: While the close-to-optimal approach schedules almost optimally with high computational complexity, the matching pursuit strategy reduces the complexity significantly at the expense of a sub-optimal performance. This degradation is however negligible, since the matching pursuit technique drops only few devices with minimal effect in the overall learning performance. The benchmark algorithm however fails to schedule dominant devices, and hence performs considerably degraded.

\begin{figure}
	\centering
%
%
\begin{tikzpicture}
	
	\begin{axis}[%
		width=4.8in,
		height=3.4in,
		at={(1.262in,0.697in)},
		scale only axis,
		xmin=-12,
		xmax=52,
		xlabel style={font=\color{white!15!black}},
		xtick={-10,0,10,20,30,40,50},
		xticklabels={{$-10$},{$0$},{$10$},{$20$},{$30$},{$40$},{$50$}},
		xlabel={$\log \gamma$ in [dB]},
		ymin=-.5,
		ymax=21,
		ytick={0,5,10,15,20},
		yticklabels={{$0$},{$5$},{$10$},{$15$},{$20$}},
		ylabel style={font=\color{white!15!black}},
		ylabel={$\Ex{\abs{\setS}}{ }$},
		axis background/.style={fill=none},
		legend style={at={(.98,.22)},legend cell align=left, align=left, draw=white!15!black}
		]

\addplot [color=black, line width=1.0pt, mark=o, mark options={thin, black}]
table[row sep=crcr]{%
-10	1.355\\
-5	2.275\\
0	3.84\\
5	6.385\\
10	9.55\\
15	12.865\\
20	15.7\\
25	17.795\\
30	19.125\\
35	19.71\\
40	19.875\\
45	19.935\\
50	19.985\\
};
\addlegendentry{Algorithm~1}

\addplot [dashed,color=black, line width=1.0pt, mark=square, mark options={solid,thin, black}]
table[row sep=crcr]{%
-10	1.38\\
	-5	3.39\\
	0	6.915\\
	5	11.195\\
	10	15.17\\
	15	18.48\\
	20	19.65\\
	25	19.875\\
	30	19.865\\
	35	19.94\\
	40	19.965\\
	45	19.97\\
	50	19.965\\
};
\addlegendentry{Close-to-optimal}

\addplot [dotted,color=black, line width=1.0pt, mark=x, mark options={solid,thin, black}]
table[row sep=crcr]{%
-10	0.68\\
-5	1.35\\
0	1.74\\
5	2.895\\
10	4.895\\
15	6.82\\
20	11.79\\
25	16.445\\
30	18.445\\
35	19.65\\
40	19.805\\
45	19.95\\
50	19.97\\
};
\addlegendentry{Benchmark}

%
		
	\end{axis}
\end{tikzpicture}%
	\caption{Average $\abs{\setS}$ versus computation error tolerance for the simulated network.}
	\label{fig:Nr_vs_gamma_Channel}
\end{figure}
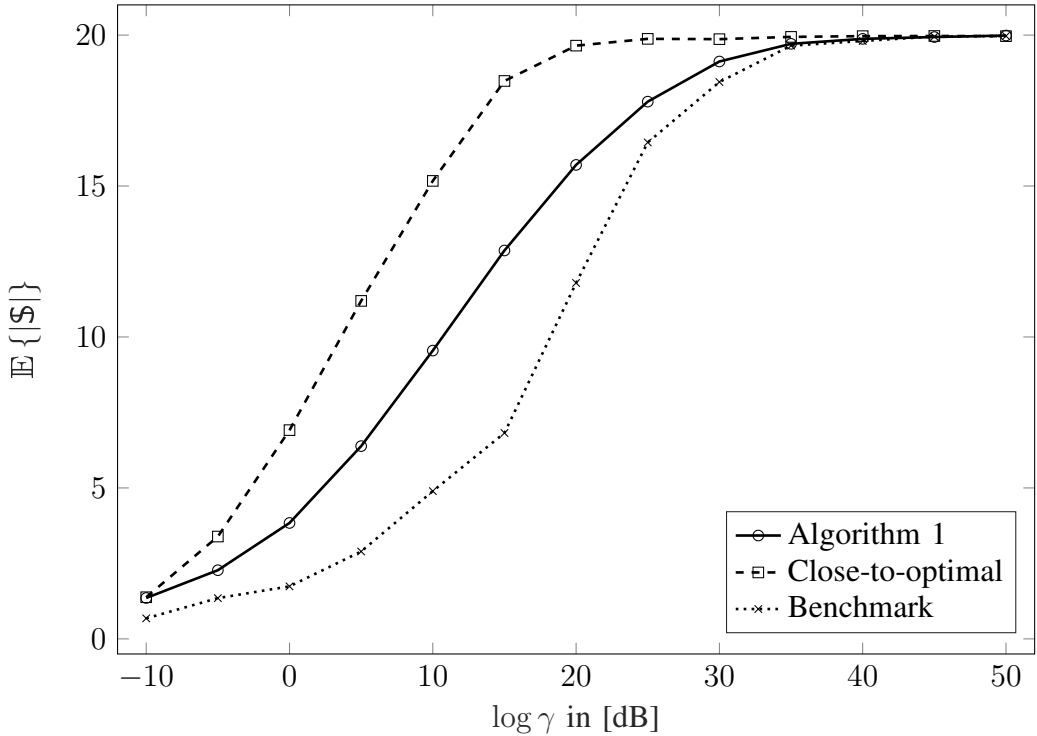

To specify the scheduling performance, we repeat the experiment in Fig.~\ref{fig:Nr_vs_gamma} for the simulated channels\footnote{Remember that in Fig.~\ref{fig:Nr_vs_gamma} the channels are generated \ac{iid} Gaussian.} and compare the proposed scheduling algorithm with the close-to-optimal and the benchmark in Fig.~\ref{fig:Nr_vs_gamma_Channel}. As the figure shows, the close-to-optimal algorithm outperforms Algorithm~\ref{alg:MPS} while both outperform the benchmark. As mentioned, this degradation in performance results in a significant complexity reduction which is demonstrated in Fig.~\ref{fig:Runtime_Channel}. In this figure, we plot the runtime of each algorithm against the number of devices $K$ and the receiver dimension $N$ while considering low error tolerance, i.e., $\log\gamma = 0$ dB.  To be able to compare the figures, we represent the vertical axis in the logarithmic scale. As the figure shows, using the proposed scheme, the complexity reduces significantly compared with both the close-to-optimal and benchmark while performing in between the two schemes. This shows a clear enhancement with respect to the complexity-performance trade-off.

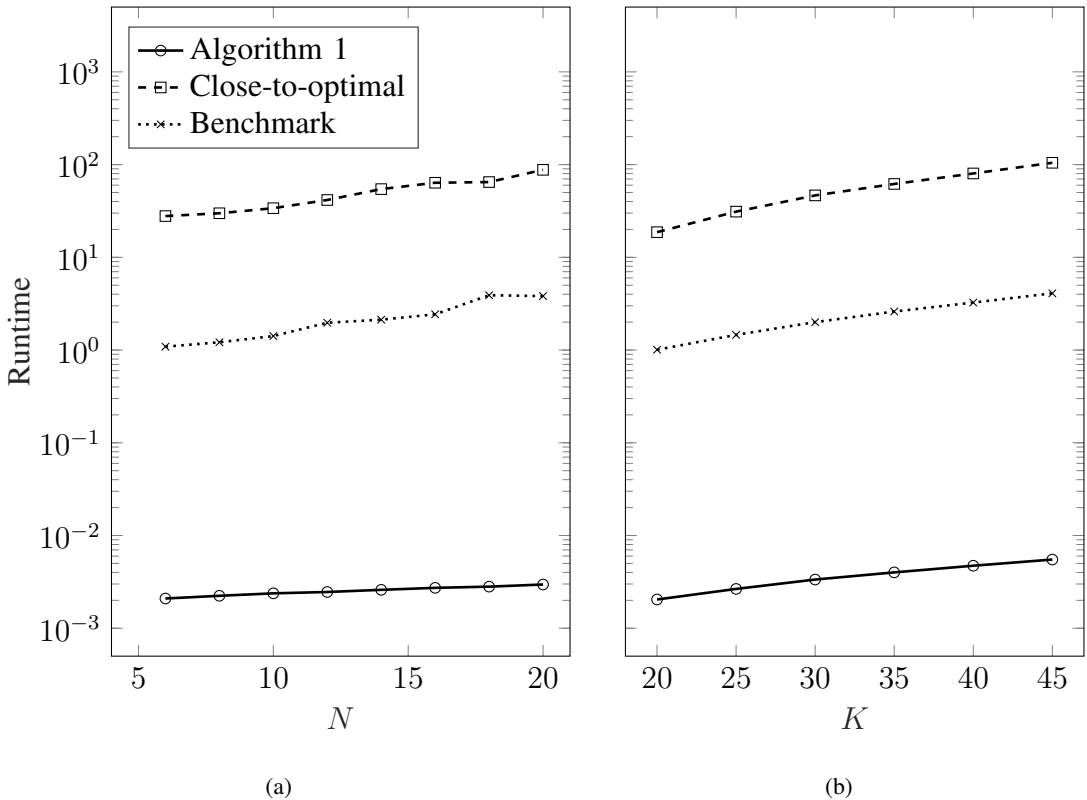
\begin{figure}
	\begin{center}
		\subfigure[ ]{
%
%
\begin{tikzpicture}
	
	\begin{axis}[%
		width=2.4in,
		height=3.4in,
		at={(1.262in,0.697in)},
		scale only axis,
		xmin=4,
		xmax=21,
		xlabel style={font=\color{white!15!black}},
		xtick={0,5,10,15,20},
		xticklabels={{$0$},{$5$},{$10$},{$15$},{$20$}},
		xlabel={$N$},
		ymin=0.0005,
		ymax=5000,
		ymode=log,
		ytick={0.001,.01,.1,1,10,100,1000},
		yticklabels={{$10^{-3}$},{$10^{-2}$},{$10^{-1}$},{$10^{0}$},{$10^{1}$},{$10^{2}$},{$10^{3}$}},
		ylabel style={font=\color{white!15!black}},
		ylabel={Runtime},
		axis background/.style={fill=none},
		legend style={at={(.67,.97)},legend cell align=left, align=left, draw=white!15!black}
		]

\addplot [color=black, line width=1.0pt, mark=o, mark options={thin, black}]
table[row sep=crcr]{%
6	0.002093399\\
8	0.002237883\\
10	0.002378967\\
12	0.002459953\\
14	0.002595873\\
16	0.002726722\\
18	0.002809277\\
20	0.00296237\\
};
\addlegendentry{Algorithm~1}

\addplot [dashed,color=black, line width=1.0pt, mark=square, mark options={solid,thin, black}]
table[row sep=crcr]{%
6	27.85725\\
8	29.87768\\
10	33.88828\\
12	41.53339\\
14	54.42133\\
16	63.71647\\
18	64.81365\\
20	87.86103\\
};
\addlegendentry{Close-to-optimal}

\addplot [dotted,color=black, line width=1.0pt, mark=x, mark options={solid,thin, black}]
table[row sep=crcr]{%
6	1.087493\\
8	1.213594\\
10	1.416067\\
12	1.969885\\
14	2.12686\\
16	2.430087\\
18	3.898592\\
20	3.823677\\
};
\addlegendentry{Benchmark}

%
		
	\end{axis}
\end{tikzpicture}%
		}
		\subfigure[ ]{
%
%
\begin{tikzpicture}
	
	\begin{axis}[%
		width=2.4in,
		height=3.4in,
		at={(1.262in,0.697in)},
		scale only axis,
		xmin=18,
		xmax=47,
		xlabel style={font=\color{white!15!black}},
		xtick={20,25,30,35,40,45},
		xticklabels={{$20$},{$25$},{$30$},{$35$},{$40$},{$45$}},
		xlabel={$K$},
		ymin=0.0005,
		ymax=5000,
		ymode=log,
		ytick={0.001,.01,.1,1,10,100,1000},
		yticklabels={{ },{ },{ },{ },{ },{ },{ }},
		ylabel style={font=\color{white!15!black}},
		ylabel={ },
		axis background/.style={fill=none},
		legend style={at={(.98,.22)},legend cell align=left, align=left, draw=white!15!black}
		]

\addplot [color=black, line width=1.0pt, mark=o, mark options={thin, black},forget plot]
table[row sep=crcr]{%
20	0.00204128000\\
25	0.00265485000\\
30	0.00335073000\\
35	0.00400329700\\
40	0.00472350700\\
45	0.00550467800\\
};

\addplot [dashed,color=black, line width=1.0pt, mark=square, mark options={solid,thin, black},forget plot]
table[row sep=crcr]{%
20	18.6729\\
25	31.11323\\
30	46.5429\\
35	61.8586\\
40	80.25945\\
45	104.6096\\
};

\addplot [dotted,color=black, line width=1.0pt, mark=x, mark options={solid,thin, black},forget plot]
table[row sep=crcr]{%
20	1.00767\\
25	1.458137\\
30	2.0001\\
35	2.6035\\
40	3.25318\\
45	4.08893\\
};

%
		
	\end{axis}
\end{tikzpicture}%
		}
	\end{center}
	\caption{Runtime in seconds against network dimensions.}
	\label{fig:Runtime_Channel}
\end{figure}

To evaluate the impact of over-the-air computation on the performance of \ac{FL}, we define a new efficiency metric. Namely, we define the \textit{over-the-air efficiency} as 
\begin{align}
	\zeta_{\mathrm{ota} } = \frac{\text{Accuracy}_{\rm ota}}{\text{Accuracy}_{\rm fl}}
\end{align}
where $\text{Accuracy}_{\rm ota}$ and $\text{Accuracy}_{\rm fl}$ represent the classification accuracy achieved by \textit{over-the-air computation} and \textit{perfect \ac{FL}}, respectively. In general, $ 0 \leq	\zeta_{\mathrm{ota} } \leq 1$ and characterizes the loss imposed on federated averaging, due to the imperfection of \ac{otaFL}. Fig.~\ref{fig:OTALoss_vs_Gamma} shows the over-the-air efficiency as a function of the error tolerance $\gamma$ for $K=20$ devices and $N=6$ antennas when we train the model with $60$ epochs, i.e., after six communication rounds, using the proposed scheduling scheme. Interestingly, the figure shows a clear trend: there exists a tolerance level at which the algorithm results in best efficiency. This behavior can be clarified in light of the contradiction between the learning performance enhancement and the degradation caused by imperfect aggregation: For small choices of $\gamma$, few devices are selected. In this case, the gain achieved by restricting the aggregation error is inferior to the loss imposed on the inference performance, due to training over a small dataset. As a result, the over-the-air efficiency improves by loosening the constraint on the aggregation error, i.e., allowing more devices to participate. This improvement continues until an optimal point is reached, where increasing the level of tolerable aggregation error\footnote{Due to imperfect aggregation.} severely degrades the overall learning performance, such that the enhancement achieved by learning over a larger dataset is not dominant anymore. By going above this optimal point, the over-the-air efficiency only decreases.   

As the figure shows, by properly choosing the tolerance level, we can achieve up to $98\%$ of over-the-air efficiency. This follows from the fact that some of devices have small local datasets, and their contribution in the learning task is rather minor. It is worth mentioning that $98\%$ of the over-the-air efficiency in this case is achieved by Algorithm~\ref{alg:MPS} which significantly reduces the scheduling complexity.

\begin{figure}
	\centering
%
%
\begin{tikzpicture}
	
	\begin{axis}[%
		width=4.8in,
		height=3.4in,
		at={(1.262in,0.697in)},
		scale only axis,
		xmin=-2,
		xmax=42,
		xlabel style={font=\color{white!15!black}},
		xtick={0,10,20,30,40},
		xticklabels={{$0$},{$10$},{$20$},{$30$},{$40$}},
		xlabel={$\log \gamma$ in [dB]},
		ymin=.7,
		ymax=1.05,
		ytick={.4,.60,.80, 0.979222387320456,1.00,1.2},
		yticklabels={{$0.4$},{$0.6$},{$0.8$},{$0.98$},{$1$},{$1.2$}},
		ylabel style={font=\color{white!15!black}},
		ylabel={$\zeta_{\mathrm{ota} }$},
		axis background/.style={fill=none},
		legend style={at={(.98,.22)},legend cell align=left, align=left, draw=white!15!black}
		]

\addplot [color=black, line width=1.0pt, mark=o, mark options={thin, black},forget plot]
table[row sep=crcr]{%
0	0.853120356612184\\
5	0.899827\\
10	0.955721\\
15	0.970604259534423\\
20	0.979222387320456\\
25	0.975482912332838\\
30	0.95269935611689\\
35	0.904011887072808\\
40	0.811267954432888\\
};

\addplot [dotted,color=black, line width=.5pt,forget plot]
table[row sep=crcr]{%
	20	-1\\
	20	0.979222387320456\\
};

\addplot [dotted,color=black, line width=.5pt,forget plot]
table[row sep=crcr]{%
	-4	0.979222387320456\\
	20	0.979222387320456\\
};

%
%
%
		
	\end{axis}
\end{tikzpicture}%
	\caption{Over-the-air efficiency against the computation error tolerance.}
	\label{fig:OTALoss_vs_Gamma}
\end{figure}
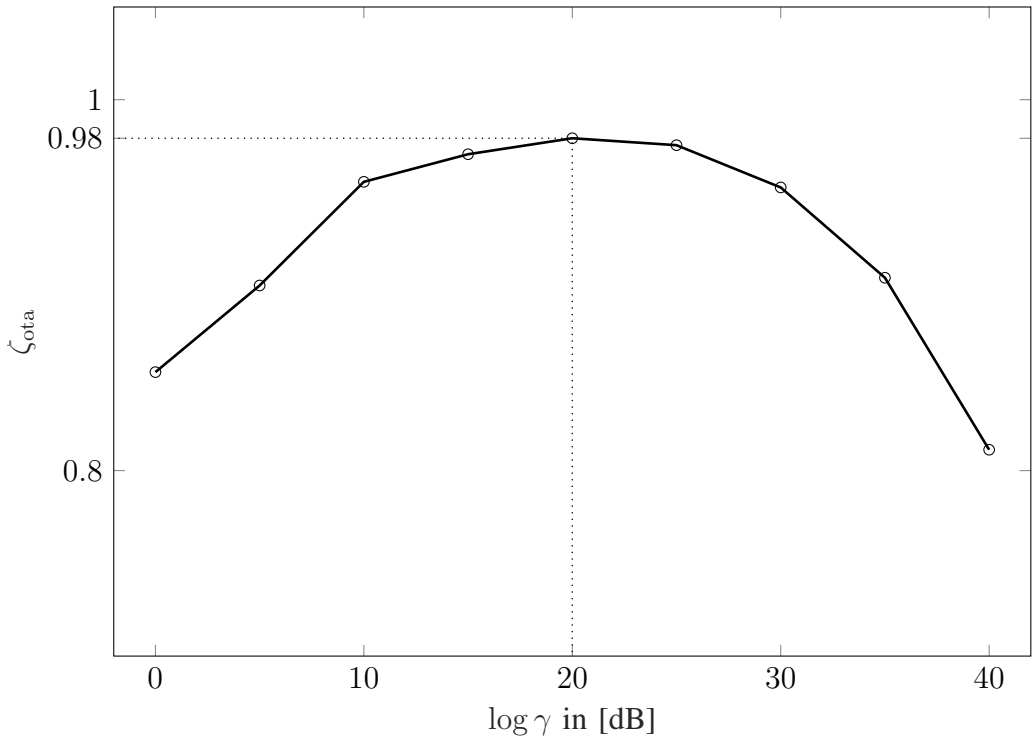


\section{Conclusions}
\label{sec:conc}
Considering the limited computational capacity of communication devices in wireless networks, the proposed scheme should be an efficient approach to perform \ac{otaFL} in various recently-proposed network architectures. An example is the concept of  \textit{ultra-dense networks} in which a very large number of devices with limited power and storage are to be connected with a single access point \cite{hwang2013holistic,bhushan2014network,andrews2016we,kamel2016ultra,adedoyin2020combination,kim2021energy}. The limited storage at edge devices in such networks leads to poor learning performance, when it is performed individually. On the other hand, the large number of devices can lead to high aggregation cost at the access point, if it applies an \ac{FL} algorithm on the complete dataset via a separately-designed aggregation strategy. In such scenarios, the matching-pursuit-based scheduling scheme allows a fair trade-off between the learning performance and the aggregation cost with a manageable complexity.

The numerical results imply that the over-the-air learning-aggregation trade-off shows a trend: With a predefined scheme for \ac{otaFL}, there exists an optimal operating point. By deviating from this point via hardening or loosening the constraint on the aggregation cost, the overall learning performance is degraded. This behavior further confirms our initial statement: Device scheduling is in general \textit{sub-optimal}; however,~with~a predefined setting and a restricted complexity budget, it can efficiently address the challenges in \ac{otaFL}. Although design of the optimal joint \ac{otaFL} strategy seems to be analytically intractable, our intuition based on properties of multiple access channels suggests that the performance of optimal \ac{aircomp}-based \ac{otaFL} can be characterized in the asymptotic regime. This is an interesting direction for future work.

The presented work can be extended in various directions. One natural direction is to extend the proposed scheme to networks with parametrized channels, e.g., \ac{irs}-aided networks \cite{wu2019towards}. This direction is briefly discussed in Appendix~\ref{app:A}. Investigating the robustness of the proposed scheme under various imperfections, e.g., imperfect \ac{csi} acquisition, is another direction for further study. 

\appendices
\section{Background on Over-the-Air Federated Learning}
\label{App:1}
The concept of \ac{otaFL} via \ac{aircomp} can be clearly illustrated through an example. In this appendix, we give a simple comprehensive example of \ac{FL}. Using this example, we then clarify the concept of \ac{aircomp}-based \ac{otaFL}.

\subsection{An Illustrative Example of Federated Learning}
Consider a basic training task in a machine learning problem: We intend to fit a given dataset onto a \textit{linear} model; namely, a set of data pairs $\setD = \set{\brc{\beta_i, \baa_i} : \forall i\in\dbc{I} }$ is available with $\beta_i$ being a real scalar and $\baa_i\in\setR^D$. We intend to learn the following linear model:
\begin{align}
	\beta = \btheta^\trp \baa \label{eq:model}
\end{align}
from the dataset, i.e., we aim to find $\btheta\in\setR^D$ such that \eqref{eq:model} describes the dataset with minimum error.

A classical approach to address this learning task is to find the model parameters, i.e., the entries of $\btheta$, via the method of \ac{ls}. This is a trivial task when the dataset is centralized at the \ac{ps}. In this case, the \ac{ps} sets the \ac{rss} as a loss function, i.e., 
\begin{align}
	F\brc{\btheta} = \sum_{i=1}^I \abs{\beta_i - \btheta^\trp \baa_i}^2,
\end{align}
and finds the model parameters by solving the following optimization problem
\begin{align}
	\min_{\btheta\in\setR^D} F\brc{\btheta}. \label{eq:Center}
\end{align}

The above task is trivial due to the centralized nature of the setting; however, in decentralized settings it becomes more challenging: Assume that the dataset $\setD$ is partitioned into sub-sets $\setD_k$ for $k\in\dbc{K}$ and shared among $K$ devices. In this case, finding the model parameters through a joint optimization, i.e., like the one given in \eqref{eq:Center}, requires that the devices send their \textit{local} datasets, i.e., $\setD_k$, into a central node. This is however not feasible for a wide variety of applications. The key issues that prevent using such a centralized setting are \textit{privacy} and \textit{communication}. In fact, by sending the local datasets to a central dataset, we increase the risk of local information being revealed to untrustworthy parties. Moreover, in many applications, the aggregate of the local datasets is \textit{large}  and sharing the complete local datasets leads to network overload.

To deal with decentralized settings, we mainly have two options:
\begin{itemize}
	\item Either let each device learn its own model parameters independently, i.e., device $k$ solves \eqref{eq:Center} for an \ac{rss} term determined by the data points in $\setD_k$.
	\item Or, use an \ac{FL} algorithm to learn a common model from the \textit{local} model parameters.
\end{itemize}
The \ac{FL} approach is of great interest in various applications, as it enables a trade-off between the intractability of centralized learning and the inefficiency of independent distributed learning. 

Let us now get back to the considered example assuming that the datasets is distributed among $K$ devices. We consider the \textit{federated averaging} algorithm \cite{mcmahan2017communication}: Each device finds its own model parameter $\btheta_k$ by solving 
\begin{align}
	\min_{\btheta\in\setR^D} F_k\brc{\btheta},
\end{align}
where $F_k\brc{\btheta}$ denotes the locally-calculated \ac{rss} of device $k$, i.e., 
\begin{align}
	F_k\brc{\btheta} = \sum_{\brc{\beta,\baa} \in \setD_k } \abs{\beta - \btheta^\trp \baa}^2.
\end{align}
It then sends $\btheta_k$ to the \ac{ps} in the network. After collecting all the local model parameters, the \ac{ps} sets the \textit{global} model parameter to  be
\begin{align}
	\bar{\btheta} = \sum_{k=1}^{K} \phi_k \btheta_k, \label{eq:global_model}
\end{align}
and shares it with the devices. Here, $0 \leq \phi_k\leq 1$ for $k\in\dbc{K}$ are weighting coefficients chosen with respect to the reliability of the local datasets\footnote{For instance, $\phi_k$ is chosen proportional to the size of the local dataset at device $k$.}. In this algorithm, one can observe $\bar{\btheta}$ as a simple approximation of the optimal model parameter derived intuitively from this observation that 
\begin{align}
	F\brc{\btheta}  = \sum_{k=1}^K F_k\brc{\btheta}.
\end{align}

In general, one can think of a more generic computational setting in which each device uploads a \textit{function} of its mode parameters and the \ac{ps} applies \ac{FL} by determining a \textit{function} of the \textit{aggregated} model parameters. In this respect, the federated averaging algorithm is seen as a special case in which all functions are set to be  \textit{linear}. 

\subsection{OTA-FL via AirComp}
In the above example, we assumed that the upload of local model parameters to the \ac{ps} and download of global model from the \ac{ps} are performed noiselessly through orthogonal channels. This is however not the case in a wireless network. In wireless networks, multiple devices usually communicate with the \ac{ps} through a \ac{mac}. In the first glance, this seems to increase the computational load of the \ac{ps}: The \ac{ps} needs to first estimate individual local parameters and then combine them. Nevertheless, it is straightforward to see that this is a wrong conclusion. To clarify this point, consider the following simple setting: All devices transmit their local parameters over a static linear \ac{mac} in $D$ consequent transmission time slots. Assume that each device has a single antenna, and the \ac{ps} is equipped with an antenna array of size $N$.  Hence, after $D$ time slots, the \ac{ps} receives
\begin{align}
	\mY = \mH \mTheta^\trp + \mXi,
\end{align}
where $\mXi$ denotes the noise process, $\mH \in \setC^{N\times K}$ is the channel matrix whose entry $\brc{n,k}$ represents the channel coefficient between device $k$ and antenna $n$,  
\begin{align}
	\mTheta = \dbc{\btheta_1,\ldots,\btheta_K},
\end{align}
and $\mY \in \setC^{N\times D}$ is the matrix of received signals whose column $d$ represents the signal received by the \ac{ps} array antenna in the $d$-th channel use. 

From the channel model, it is observed that the received signals are already combined versions of the local parameters. Hence, a proper linear receiver can estimate $\bar{\btheta}$ directly from $\mY$ without recovering the individual local parameters. To further clarify this latter point, let us ignore the noise process in the channel. Using a linear receiver $\br\in\setC^{N}$, the signals received over multiple antennas of the \ac{ps} can be combined into a single vector $\hat{\btheta}\in\setC^D$ as
\begin{align}
	\hat{\btheta}^\trp = \br^\trp \mY = \br^\trp \mH \mTheta^\trp.
\end{align}
By designing $\br$ such that 
\begin{align}
	\br^\trp \mH = \dbc{\phi_1,\ldots,\phi_K}, \label{eq:OAF_1}
\end{align}
we have
\begin{align}
	\hat{\btheta} =   \mTheta \dbc{\phi_1,\ldots,\phi_K}^\trp = \bar{\btheta}.
\end{align}
This means that the \ac{ps} can directly apply federated averaging by finding a $\br$ which satisfies \eqref{eq:OAF_1} and no further parameter estimation is required. With a large-enough antenna array, such a task is easily fulfilled via \ac{zf}.

In practice, the exact calculation of $\bar{\btheta}$ is not possible at the \ac{ps}, as the received signals are noisy. Thus, in this case, we should determine the linear receiver $\br$ such that $\hat{\btheta}$ estimates $\bar{\btheta}$, properly. A common approach is to find the receiver by applying an \ac{mmse} estimator to determine the target model parameters $\bar{\btheta}$. This means that we find $\br$ as 
\begin{align}
	\br =   \argmin_{\bxx \in \setR^N} \Ex{\norm{\mY^\trp \bxx - \bar{\btheta}}^2 }{ }.
\end{align}
Here, the expectation is taken with respect to all random quantities in the channel, as well as the distribution of $\bar{\btheta}$ which is determined through assuming a stochastic model for either the local datasets or the local model parameters.

\section{MMSE vs. Zero-Forcing Coordination}
\label{app:MMSE}
As indicated in Section~\ref{sec:ZF}, zero forcing is a \textit{sub-optimal}, yet efficient, approach for coordination. The optimal approach in this case is to find \ac{mmse} estimate: We find the minimizer of the following objective function:
\begin{align}
	\mae \brc{\bmm , \setS, \eta , \bpsi} = \Ex{\abs{\hat{\theta} \brc{\bmm , \setS, \eta , \bpsi} -{\theta}\brc{\setS} }^2 }{ }.
\end{align}
with respect to all design parameters, i.e., $\bmm$, $\setS$, $\eta$, and $\bpsi$, subject to the power constraint, i.e., 
\begin{align}
	\norm{\bpsi}_{\infty} \leq \sqrt{P}
\end{align}
where $\norm{\cdot}_{\infty}$ denotes the $\ell_{\infty}$-norm. 

Unlike zero forcing coordination, the \ac{mmse} coordination does not necessarily cancel the local parameters in the objective function. Hence, the exact derivation of the metric in this case requires an explicit statistical model for the local model parameters. This leads to a more complicated expression for the aggregation cost metric. It is worth mentioning that, even for the \textit{unrealistic} model of \ac{iid} model parameters, a closed-form expression for the objective function is not straightforwardly derived for an arbitrary noise variance. It is however straightforward to show that for small enough noise variances, the \ac{mmse} and zero forcing coordination approaches meet.

\section{Extensions to Parametrized Channels}
\label{app:A}
In various recent technologies, the wireless channel is parametrized by tunable variables. One of the most well-known examples is \ac{irs}-aided networks in which the end-to-end channels between the \ac{ps} and the edge devices are modified by \acp{irs} \cite{Wang2022FLIRS}. The efficient approach in these networks is to tune the channel parameters jointly with user scheduling. For the particular example of \ac{irs}-aided networks, a \ac{dc} programming-based algorithm for joint scheduling and channel tuning is proposed in \cite{Wang2022FLIRS}. Considering the wide scope of recent technologies that enable the parametrization of wireless channels, in this section, we extend our proposed scheduling scheme to a more generic form by which joint scheduling and channel tuning is performed.

\subsection{Networks with Parametrized Channels}
A parametrized channel between edge device $k$ and the \ac{ps} is in general denoted by $\bh_k\brc{\bmu}$ for some vector of tunable parameters
\begin{align}
	\bmu = \dbc{\mu_1 , \ldots, \mu_M}^\trp \label{eq:mu}
\end{align}
with support $\setU$, i.e., $\bmu\in\setU$. These parametrized channels can be collected into a parametrized channel matrix as
\begin{align}
	\mH \brc{\bmu} =  \dbc{\bh_1\brc{\bmu},\ldots,\bh_K\brc{\bmu}}.
\end{align}

A well-known example of parametrized channels is a wireless network equipped with \acp{irs}. This example is illustrated below:
\begin{example}[IRS-aided Networks]
	\label{ex:IRS}
	Consider the same setting presented throughout the problem formulation in Section~\ref{sec:system-model}. We further assume that a passive \ac{irs} is located at a fixed distance from the \ac{ps}. The \ac{irs} consists of $M$ reflecting elements that reflect their received signals after applying tunable phase-shifts. Let $\rmg_{m,k}$ denote the channel coefficient between device $k$ and the $m$-th reflecting element on the \ac{irs}. Moreover, denote the channel vector between \ac{irs} element $m$ and the \ac{ps} by $\bt_m\in\setC^N$. The end-to-end channel between device $k$ and the \ac{ps} is given by
	\begin{align}
		\bh_k\brc{\bmu} = \bh_k^0 + \mT \mG_k \bmu
	\end{align}
	where $\bh_k^0 $ denotes the direct link between the \ac{ps} and device $k$, the matrix $\mT$ represents the channel matrix between the \ac{irs} and the \ac{ps}, i.e., $\mT = \dbc{\bt_1,\ldots,\bt_M}$, and 
	\begin{align}
		\mG_k = \Diag{\rmg_{1,k} , \ldots, \rmg_{M,k}}.
	\end{align}
	The parameter vector $\bmu$ is defined as in \eqref{eq:mu} with $\mu_m$ denoting the \textit{tunable} phase-shift applied by \ac{irs} element $m$, i.e., $\mu_m \in \setU_1$  with $\setU_1$ denoting the unit circle on the complex plane.
\end{example}

\subsection{Joint Scheduling and Tuning via Matching Pursuit}
In a network with parametrized channels, the computation error directly depends on the channel parameters: For a given $\bmu$, the \textit{parametrized} computation error is given by
\begin{align}
	\mae \brc{\bmu, \bmm, \setS}=  \frac{\sigma^2}{P}\max_{k\in \setS} \phi_k^2 \dfrac{\norm{\bmm}^2}{\abs{\bh_k^\her\brc{\bmu} \bmm}^2}. 
\end{align}
As a result, the scheduling and channel tuning are mutually coupled: For a given subset of devices, the channel tuning task aims to find $\bmu$, such that the parametrized computation error is minimized. For a given set of channel parameters, scheduling tries to find the largest subset of devices that results in a bounded computation error. 

The joint scheduling and channel tuning task can be mathematically formulated as
\begin{align}
	&\max_{\bmu\in\setU^M , \bmm\in\setC^N, \setS} \; \abs{\setS}  \label{eq:JointSchTune} \\
	&\begin{array}{lll}
		\subto & \phi_k^2 {\norm{\bmm}^2} - \gamma {\abs{\bh_k^\her\brc{\bmu} \bmm}^2} \leq 0  & \forall k\in\setS \\
		& \setS \subseteq \dbc{K} &
	\end{array}\nonumber	.
\end{align}
Similar to the basic case with non-parametrized channels, the direct joint solution to \eqref{eq:JointSchTune} is computationally intractable, as the scheduling task reduces to integer programming. We hence extend the matching pursuit approach to this case while tuning the channel parameters via the alternating optimization technique.

We start the extension, by focusing on iteration $t$ in which the subset $\setS\itr{t}$ is determined after omitting index $i\itr{t}$ from $\setS\itr{t-1}$ in the previous iteration. Denote the channel parameters updated in the last iteration by $\bmu\itr{t-1}$. Following the similar weighting approach as in Algorithm~\ref{alg:MPS}, the linear operator $\bmm$ and the channel parameters $\bmu$ are jointly updated as
\begin{align}
	\brc{\bmu\itr{t} ,  \bmm\itr{t}} &= \argmin_{\bmu, \bmm} \; \bmm^\her \brc{\Phi\itr{t} \mI_N  - \gamma \mH\brc{\bmu} \mW\itr{t} \mH^\her\brc{\bmu} } \bmm. \label{eq:mu_c}
\end{align}
The global solution to this optimization problem is in general intractable. We hence invoke the alternating optimization technique to approximate the solution.

The alternating optimization approach approximates the solution as follows: It starts by setting $\bmu_0 = \bmu\itr{t-1}$ and finds $\bmm_j$ for alternation $j\geq 0$ as 
\begin{subequations}
	\begin{align}
		\bmm_{j+1} &= \argmin_{ \bmm} \; \bmm^\her \brc{\Phi\itr{t} \mI_N  - \gamma \mH\brc{\bmu_{j}} \mW\itr{t} \mH^\her\brc{\bmu_{j}} } \bmm\\
		&=  \bvv_{j} \label{eq:c_update}
	\end{align}
\end{subequations}
where $\bvv_{j}$ is the column corresponding to the largest singular value of $\mH\brc{\bmu_{j}}\sqrt{\mW\itr{t}}$, i.e., the $n^\star$-th column of $\mV_{j}$ with
\begin{align}
	\mH\brc{\bmu_{j}} \sqrt{\mW\itr{t}} = \mV_{j} \mSigma_{j} \mU_{j}^{ \her}
\end{align}
and $n^\star$ denotes the index of the column of $\mSigma_{j}$ which includes the largest non-zero entry. It then alternates the optimization variable, i.e., it solves 
\begin{subequations}
	\begin{align}
		\bmu_{j+1} &= \argmin_{ \bmu \in \setU^M} \; \bmm_{j+1}^\her \brc{\Phi\itr{t} \mI_N  - \gamma \mH\brc{\bmu} \mW\itr{t} \mH^\her\brc{\bmu} } \bmm_{j+1},\\
		&= \argmax_{ \bmu \in \setU^M} \; \bmm_{j+1}^\her \mH\brc{\bmu} \mW\itr{t} \mH^\her\brc{\bmu} \bmm_{j+1},
	\end{align}
\end{subequations}
which is a norm minimization and can be solved by various algorithms depending on the exact form of $\mH\brc{\bmu}$. For many forms of $\mH\brc{\bmu}$, the solution to this optimization cannot be determined explicitly ans is approximated via some algorithm. We hence denote it as 	
\begin{align}
	\bmu_{j+1} = \maP \brc{\bmm_{j+1}\vert \mH, \mW\itr{t}} \label{eq:mu_update}
\end{align}
where $\maP\brc{\cdot}$ denotes the algorithm which finds (or approximates) 
\begin{align}
	\bmu_{j+1}^\star = \argmax_{ \bmu \in \setU^M} \; \bmm_{j+1}^\her \mH\brc{\bmu} \mW\itr{t} \mH^\her\brc{\bmu} \bmm_{j+1}.
\end{align}
We call this algorithm the \textit{channel update policy} which should be adapted to the application.

The solution to \eqref{eq:mu_c} is finally approximated by alternating between \eqref{eq:mu_update} and \eqref{eq:c_update} for a finite number of alternations. The receiver and channel parameters in iteration $t$ of the main loop, i.e., $\bmm\itr{t}$ and $\bmu\itr{t}$, are set to be the converging solutions determined by alternating optimization.

We finally find the most dominant root of the computation error and remove it from the selected support, i.e., we set
\begin{align}
	i\itr{t+1} = \argmax_{k \in \setS\itr{t} }  \phi_k^2  - \gamma {\abs{\bh_k^\her \brc{\bmu\itr{t}}  \bmm\itr{t} }^2},
\end{align}
and terminate the algorithm, when the computation error falls below the desired threshold. The pseudo-code for this process is given in Algorithm~\ref{alg:MPS_Tune}.

\begin{algorithm}[H]
	\caption{Matching Pursuit Based Joint Tuning and Scheduling $\maA_2\brc{\cdot}$} 
	\label{alg:MPS_Tune}
	\begin{algorithmic}
		\INPUT The updating strategy $\Pi\brc{\cdot}$, and the real positive scalars $\phi_k$ for $k\in \dbc{K}$.
		\REQUIRE{ Set $\setS\itr{0} = \dbc{K}$, $i\itr{1} = \emptyset$, $\mW\itr{1} = \mI_K$, $\Delta\itr{1} = +\infty$ and $F_k\itr{1} = 1$ for $k\in\dbc{K}$. Set $\bmu\itr{0}$ to some feasible point in $\setU^M$.}
		\WHILE{ $\Delta\itr{t} > 0$ }{
			\STATE Set $\setS\itr{t} = \setS\itr{t-1} - \set{i\itr{t}}$.
			\STATE Update $\mW\itr{t}$ by setting $\dbc{\mW\itr{t}}_{i\itr{t},i\itr{t}} = 0$ and
				$\dbc{\mW\itr{t}}_{k,k} = \Pi \brc{ F_k\itr{t} }$
			for all $k\in\setS\itr{t}$.
			\STATE Set $\bmu_0 = \bmu\itr{t-1}$ and $j=0$.
			\WHILE{ It converges}{
				\STATE Find the \ac{svd} %
					$\mH \brc{\bmu_{j}} \sqrt{\mW\itr{t}} = \mV_{j} \mSigma_j \mU_j^{ \her}$ %
				and let $n_j$ be the index of the largest singular value.
				\STATE Let $\bmm_{j+1} $ be the $n_j$-th column of $ \mV_j$ and update the channel parameters as
				\begin{align}
					\bmu_{j+1} = \maP \brc{\bmm_{j+1}\vert \mH, \mW\itr{t}}
				\end{align}
				\STATE Let $j \leftarrow j+1$.
			}
			\ENDWHILE
			\STATE Set $\bmu\itr{t} = \bmu_{J}$ and $\bmm\itr{t} = \bmm_{J}$ with $J$ denoting the final alternation.
			\STATE Update the \textit{constraint indicator} $F_k\itr{t}$ for device $k \in \setS\itr{t} $ as
			\begin{align}
				F_k\itr{t}  = \phi_k^2  - \gamma {\abs{\bh_k^\her\brc{\bmu\itr{t}} \bmm\itr{t} }^2}.
			\end{align}
			\STATE Find the next device index and the maximal constraint as
			\begin{subequations}
				\begin{align}
					i\itr{t+1} &= \argmax_{k \in \setS\itr{t} } F_k\itr{t},\\
					\Delta\itr{t+1} &=  \max_{k \in \setS\itr{t} } F_k\itr{t}.
				\end{align}
			\end{subequations}
			\STATE Let $t \leftarrow t+1$.
		}
		\ENDWHILE
		\OUTPUT $\bmm\itr{T}$, $\bmu\itr{T}$ and $\setS\itr{T}$ with $T$ being the last iteration.
	\end{algorithmic} 
\end{algorithm}

\subsection{Example: OTA-FL via AirComp in IRS-Aided Wireless Networks}
The joint scheduling and tuning algorithm considers a general setting with parametrized channels. Nevertheless, for a given application, the channel update policy $\maP\brc{\cdot}$ should be adapted due to the parametrization model $\mH\brc{\bmu}$. In this section, we consider the particular example of \ac{irs}-aided wireless networks and develop a channel update policy for the joint scheduling and channel tuning algorithm.

Consider the \ac{irs}-aided network described in Example~\ref{ex:IRS}. For this setting, the marginal channel tuning problem reduces to
\begin{align}
	\bmu_{j+1}^\star = \argmax_{ \bmu \in \setU_1^M} \; \bmu^\her \mQ_j\itr{t} \bmu + 2 \Re\set{\bmu^\her \baa_j\itr{t}}
\end{align} 
where $\mQ_j\itr{t}$ and $\baa_j\itr{t}$ are given by
\begin{subequations}
	\begin{align}
		\mQ_j\itr{t} &= \sum_{k=1}^K w_k\itr{t} \mG_k^\her \mT \bmm_{j+1} \bmm_{j+1}^\her \mT \mG_k,\\
		\baa_j\itr{t} &= \sum_{k=1}^K w_k\itr{t} \mG_k^\her \mT \bmm_{j+1} \bmm_{j+1}^\her \bh^0_k.
	\end{align}
\end{subequations}
This is a unit-modulus problem which reduces in general to an \ac{np}-hard problem. Classical approaches to estimate the solutions in polynomial time are to use the \ac{bcd} method or \ac{mm} technique.

\bibliography{ref}

\begin{thebibliography}{10}
\providecommand{\url}[1]{#1}
\csname url@samestyle\endcsname
\providecommand{\newblock}{\relax}
\providecommand{\bibinfo}[2]{#2}
\providecommand{\BIBentrySTDinterwordspacing}{\spaceskip=0pt\relax}
\providecommand{\BIBentryALTinterwordstretchfactor}{4}
\providecommand{\BIBentryALTinterwordspacing}{\spaceskip=\fontdimen2\font plus
\BIBentryALTinterwordstretchfactor\fontdimen3\font minus
  \fontdimen4\font\relax}
\providecommand{\BIBforeignlanguage}[2]{{%
\expandafter\ifx\csname l@#1\endcsname\relax
\typeout{** WARNING: IEEEtran.bst: No hyphenation pattern has been}%
\typeout{** loaded for the language `#1'. Using the pattern for}%
\typeout{** the default language instead.}%
\else
\language=\csname l@#1\endcsname
\fi
#2}}
\providecommand{\BIBdecl}{\relax}
\BIBdecl

\bibitem{lyu2020threats}
L.~Lyu, H.~Yu, and Q.~Yang, ``Threats to federated learning: A survey,'' 2020,
  arXiv preprint arXiv:2003.02133.

\bibitem{aledhari2020federated}
M.~Aledhari, R.~Razzak, R.~M. Parizi, and F.~Saeed, ``Federated learning: A
  survey on enabling technologies, protocols, and applications,'' \emph{IEEE
  Access}, vol.~8, pp. 140\,699--140\,725, 2020.

\bibitem{liang2020think}
P.~P. Liang, T.~Liu, L.~Ziyin, N.~B. Allen, R.~P. Auerbach, D.~Brent,
  R.~Salakhutdinov, and L.-P. Morency, ``Think locally, act globally: Federated
  learning with local and global representations,'' 2020, arXiv preprint
  arXiv:2001.01523.

\bibitem{yang2021federated}
Z.~Yang, M.~Chen, K.-K. Wong, H.~V. Poor, and S.~Cui, ``Federated learning for
  {6G}: Applications, challenges, and opportunities,'' \emph{Engineering},
  2021.

\bibitem{nguyen2021federated}
D.~C. Nguyen, M.~Ding, P.~N. Pathirana, A.~Seneviratne, J.~Li, and H.~V. Poor,
  ``Federated learning for internet of things: A comprehensive survey,''
  \emph{IEEE Communications Surveys \& Tutorials}, 2021.

\bibitem{ananthanarayanan2017real}
G.~Ananthanarayanan, P.~Bahl, P.~Bod{\'\i}k, K.~Chintalapudi, M.~Philipose,
  L.~Ravindranath, and S.~Sinha, ``Real-time video analytics: The killer app
  for edge computing,'' \emph{Computer}, vol.~50, no.~10, pp. 58--67, Oct.
  2017.

\bibitem{hung2018videoedge}
C.-C. Hung, G.~Ananthanarayanan, P.~Bodik, L.~Golubchik, M.~Yu, P.~Bahl, and
  M.~Philipose, ``Videoedge: Processing camera streams using hierarchical
  clusters,'' in \emph{Proc. IEEE/ACM Symposium on Edge Computing (SEC)},
  Seattle, WA,USA, Oct. 2018, pp. 115--131.

\bibitem{zhou2019edge}
Z.~Zhou, X.~Chen, E.~Li, L.~Zeng, K.~Luo, and J.~Zhang, ``Edge intelligence:
  Paving the last mile of artificial intelligence with edge computing,''
  \emph{Proceedings of the IEEE}, vol. 107, no.~8, pp. 1738--1762, 2019.

\bibitem{shi2016edge}
W.~Shi, J.~Cao, Q.~Zhang, Y.~Li, and L.~Xu, ``Edge computing: Vision and
  challenges,'' \emph{IEEE Internet of Things Journal}, vol.~3, no.~5, pp.
  637--646, 2016.

\bibitem{shi2016promise}
W.~Shi and S.~Dustdar, ``The promise of edge computing,'' \emph{Computer},
  vol.~49, no.~5, pp. 78--81, 2016.

\bibitem{abbas2017mobile}
N.~Abbas, Y.~Zhang, A.~Taherkordi, and T.~Skeie, ``Mobile edge computing: A
  survey,'' \emph{IEEE Internet of Things Journal}, vol.~5, no.~1, pp.
  450--465, 2017.

\bibitem{jordan2018communication}
M.~I. Jordan, J.~D. Lee, and Y.~Yang, ``Communication-efficient distributed
  statistical inference,'' \emph{Journal of the American Statistical
  Association}, vol. 114, no. 526, pp. 668--681, 2019.

\bibitem{chen2018lag}
T.~Chen, G.~Giannakis, T.~Sun, and W.~Yin, ``{LAG}: Lazily aggregated gradient
  for communication-efficient distributed learning,'' \emph{Advances in Neural
  Information Processing Systems}, vol.~31, 2018.

\bibitem{elgabli2020gadmm}
A.~Elgabli, J.~Park, A.~S. Bedi, M.~Bennis, and V.~Aggarwal, ``{GADMM}: Fast
  and communication efficient framework for distributed machine learning.''
  \emph{Journal of Machine Learning Research}, vol.~21, no.~76, pp. 1--39,
  2020.

\bibitem{liu2020over}
W.~Liu, X.~Zang, Y.~Li, and B.~Vucetic, ``Over-the-air computation systems:
  Optimization, analysis and scaling laws,'' \emph{IEEE Transactions on
  Wireless Communications}, vol.~19, no.~8, pp. 5488--5502, 2020.

\bibitem{li2020federated}
T.~Li, A.~K. Sahu, A.~Talwalkar, and V.~Smith, ``Federated learning:
  Challenges, methods, and future directions,'' \emph{IEEE Signal Processing
  Magazine}, vol.~37, no.~3, pp. 50--60, 2020.

\bibitem{konevcny2016federated}
J.~Kone{\v{c}}n{\`y}, H.~B. McMahan, D.~Ramage, and P.~Richt{\'a}rik,
  ``Federated optimization: Distributed machine learning for on-device
  intelligence,'' 2016, arXiv preprint arXiv:1610.02527.

\bibitem{bonawitz2019towards}
K.~Bonawitz, H.~Eichner, W.~Grieskamp, D.~Huba, A.~Ingerman, V.~Ivanov,
  C.~Kiddon, J.~Kone{\v{c}}n{\`y}, S.~Mazzocchi, B.~McMahan \emph{et~al.},
  ``Towards federated learning at scale: System design,'' \emph{Proceedings of
  Machine Learning and Systems}, vol.~1, pp. 374--388, 2019.

\bibitem{yang2019federated}
Q.~Yang, Y.~Liu, Y.~Cheng, Y.~Kang, T.~Chen, and H.~Yu, ``Federated learning,''
  \emph{Synthesis Lectures on Artificial Intelligence and Machine Learning},
  vol.~13, no.~3, pp. 1--207, 2019.

\bibitem{sattler2019robust}
F.~Sattler, S.~Wiedemann, K.-R. M{\"u}ller, and W.~Samek, ``Robust and
  communication-efficient federated learning from non-iid data,'' \emph{IEEE
  Transactions on Neural Networks and Learning Systems}, vol.~31, no.~9, pp.
  3400--3413, 2019.

\bibitem{shoham2019overcoming}
N.~Shoham, T.~Avidor, A.~Keren, N.~Israel, D.~Benditkis, L.~Mor-Yosef, and
  I.~Zeitak, ``Overcoming forgetting in federated learning on non-iid data,''
  2019, arXiv preprint arXiv:1910.07796.

\bibitem{lee2020bayesian}
S.~Lee, C.~Park, S.-N. Hong, Y.~C. Eldar, and N.~Lee, ``Bayesian federated
  learning over wireless networks,'' 2020, arXiv preprint arXiv:2012.15486.

\bibitem{luo2021no}
M.~Luo, F.~Chen, D.~Hu, Y.~Zhang, J.~Liang, and J.~Feng, ``No fear of
  heterogeneity: Classifier calibration for federated learning with non-iid
  data,'' \emph{Advances in Neural Information Processing Systems}, vol.~34,
  2021.

\bibitem{gafni2022federated}
T.~Gafni, N.~Shlezinger, K.~Cohen, Y.~C. Eldar, and H.~V. Poor, ``Federated
  learning: A signal processing perspective,'' \emph{IEEE Signal Processing
  Magazine}, vol.~39, no.~3, pp. 14--41, 2022.

\bibitem{fredrikson2015model}
M.~Fredrikson, S.~Jha, and T.~Ristenpart, ``Model inversion attacks that
  exploit confidence information and basic countermeasures,'' in \emph{Proc. of
  the 22nd ACM SIGSAC Conference on Computer and Communications Security},
  2015, pp. 1322--1333.

\bibitem{wei2020secFed}
K.~Wei, J.~Li, M.~Ding, C.~Ma, H.~H. Yang, F.~Farokhi, S.~Jin, T.~Q.~S. Quek,
  and H.~V. Poor, ``Federated learning with differential privacy: Algorithms
  and performance analysis,'' \emph{IEEE Transactions on Information Forensics
  and Security}, vol.~15, pp. 3454--3469, 2020.

\bibitem{kim2021federated}
M.~Kim, O.~G{\"u}nl{\"u}, and R.~F. Schaefer, ``Federated learning with local
  differential privacy: Trade-offs between privacy, utility, and
  communication,'' in \emph{Proc. IEEE International Conference on Acoustics,
  Speech and Signal Processing (ICASSP)}, 2021, pp. 2650--2654.

\bibitem{shi2020communication}
Y.~Shi, K.~Yang, T.~Jiang, J.~Zhang, and K.~B. Letaief,
  ``Communication-efficient edge {AI}: Algorithms and systems,'' \emph{IEEE
  Communications Surveys \& Tutorials}, vol.~22, no.~4, pp. 2167--2191, 2020.

\bibitem{yang2020scheduling}
H.~H. Yang, Z.~Liu, T.~Q.~S. Quek, and H.~V. Poor, ``Scheduling policies for
  federated learning in wireless networks,'' \emph{IEEE Transactions on
  Communications}, vol.~68, no.~1, pp. 317--333, 2020.

\bibitem{chen2021communication}
M.~Chen, N.~Shlezinger, H.~V. Poor, Y.~C. Eldar, and S.~Cui,
  ``Communication-efficient federated learning,'' \emph{Proceedings of the
  National Academy of Sciences}, vol. 118, no.~17, 2021.

\bibitem{crane2019dingo}
R.~Crane and F.~Roosta, ``{DINGO}: Distributed {Newton}-type method for
  gradient-norm optimization,'' \emph{Advances in Neural Information Processing
  Systems}, vol.~32, 2019.

\bibitem{gao2020can}
H.~Gao and H.~Huang, ``Can stochastic zeroth-order frank-wolfe method converge
  faster for non-convex problems?'' in \emph{Proc. International Conference on
  Machine Learning}.\hskip 1em plus 0.5em minus 0.4em\relax PMLR, 2020, pp.
  3377--3386.

\bibitem{nguyen2020fast}
H.~T. Nguyen, V.~Sehwag, S.~Hosseinalipour, C.~G. Brinton, M.~Chiang, and H.~V.
  Poor, ``Fast-convergent federated learning,'' \emph{IEEE Journal on Selected
  Areas in Communications}, vol.~39, no.~1, pp. 201--218, 2020.

\bibitem{lin2018deep}
Y.~Lin, S.~Han, H.~Mao, Y.~Wang, and B.~Dally, ``Deep gradient compression:
  Reducing the communication bandwidth for distributed training,'' in
  \emph{Proc. International Conference on Learning Representations}, Feb. 2018.

\bibitem{yuan2018variance}
K.~Yuan, B.~Ying, J.~Liu, and A.~H. Sayed, ``Variance-reduced stochastic
  learning by networked agents under random reshuffling,'' \emph{IEEE
  Transactions on Signal Processing}, vol.~67, no.~2, pp. 351--366, Sep. 2018.

\bibitem{shlezinger2020uveqfed}
N.~Shlezinger, M.~Chen, Y.~C. Eldar, H.~V. Poor, and S.~Cui, ``{UVeQFed}:
  Universal vector quantization for federated learning,'' \emph{IEEE
  Transactions on Signal Processing}, vol.~69, pp. 500--514, 2020.

\bibitem{zhu2019broadband}
G.~Zhu, Y.~Wang, and K.~Huang, ``Broadband analog aggregation for low-latency
  federated edge learning,'' \emph{IEEE Transactions on Wireless
  Communications}, vol.~19, no.~1, pp. 491--506, Oct. 2019.

\bibitem{yang2020federated}
K.~Yang, T.~Jiang, Y.~Shi, and Z.~Ding, ``Federated learning via over-the-air
  computation,'' \emph{IEEE Transactions on Wireless Communications}, vol.~19,
  no.~3, pp. 2022--2035, Jan. 2020.

\bibitem{amiri2020machine}
M.~M. Amiri and D.~G{\"u}nd{\"u}z, ``Machine learning at the wireless edge:
  Distributed stochastic gradient descent over-the-air,'' \emph{IEEE
  Transactions on Signal Processing}, vol.~68, pp. 2155--2169, 2020.

\bibitem{xu2021learning}
C.~Xu, S.~Liu, Z.~Yang, Y.~Huang, and K.-K. Wong, ``Learning rate optimization
  for federated learning exploiting over-the-air computation,'' \emph{IEEE
  Journal on Selected Areas in Communications}, vol.~39, no.~12, pp.
  3742--3756, 2021.

\bibitem{nazer2007computation}
B.~Nazer and M.~Gastpar, ``Computation over multiple-access channels,''
  \emph{IEEE Transactions on Information Theory}, vol.~53, no.~10, pp.
  3498--3516, 2007.

\bibitem{shi2021joint}
W.~Shi, S.~Zhou, Z.~Niu, M.~Jiang, and L.~Geng, ``Joint device scheduling and
  resource allocation for latency constrained wireless federated learning,''
  \emph{IEEE Transactions on Wireless Communications}, vol.~20, no.~1, pp.
  453--467, 2021.

\bibitem{xia2021federated}
W.~Xia, W.~Wen, K.-K. Wong, T.~Q. Quek, J.~Zhang, and H.~Zhu,
  ``Federated-learning-based client scheduling for low-latency wireless
  communications,'' \emph{IEEE Wireless Communications}, vol.~28, no.~2, pp.
  32--38, 2021.

\bibitem{amiri2020update}
M.~M. Amiri, D.~G{\"u}nd{\"u}z, S.~R. Kulkarni, and H.~V. Poor, ``Update aware
  device scheduling for federated learning at the wireless edge,'' in
  \emph{Proc. IEEE International Symposium on Information Theory (ISIT)}.\hskip
  1em plus 0.5em minus 0.4em\relax IEEE, 2020, pp. 2598--2603.

\bibitem{chen2020wireless}
M.~Chen, H.~V. Poor, W.~Saad, and S.~Cui, ``Wireless communications for
  collaborative federated learning,'' \emph{IEEE Communications Magazine},
  vol.~58, no.~12, pp. 48--54, 2020.

\bibitem{yang2020energy}
Z.~Yang, M.~Chen, W.~Saad, C.~S. Hong, and M.~Shikh-Bahaei, ``Energy efficient
  federated learning over wireless communication networks,'' \emph{IEEE
  Transactions on Wireless Communications}, vol.~20, no.~3, pp. 1935--1949,
  2020.

\bibitem{Wang2022FLIRS}
Z.~Wang, J.~Qiu, Y.~Zhou, Y.~Shi, L.~Fu, W.~Chen, and K.~B. Letaief,
  ``Federated learning via intelligent reflecting surface,'' \emph{IEEE
  Transactions on Wireless Communications}, vol.~21, no.~2, pp. 808--822, Feb.
  2022.

\bibitem{mallat1993matching}
S.~G. Mallat and Z.~Zhang, ``Matching pursuits with time-frequency
  dictionaries,'' \emph{IEEE Transactions on Signal Processing}, vol.~41,
  no.~12, pp. 3397--3415, 1993.

\bibitem{liu2021reconfigurable}
H.~Liu, X.~Yuan, and Y.-J.~A. Zhang, ``Reconfigurable intelligent surface
  enabled federated learning: A unified communication-learning design
  approach,'' \emph{IEEE Transactions on Wireless Communications}, vol.~20,
  no.~11, pp. 7595--7609, 2021.

\bibitem{wadu2020federated}
M.~M. Wadu, S.~Samarakoon, and M.~Bennis, ``Federated learning under channel
  uncertainty: Joint client scheduling and resource allocation,'' in
  \emph{Proc. IEEE Wireless Communications and Networking Conference (WCNC)},
  2020, pp. 1--6.

\bibitem{goldenbaum2013robust}
M.~Goldenbaum and S.~Stanczak, ``Robust analog function computation via
  wireless multiple-access channels,'' \emph{IEEE Transactions on
  Communications}, vol.~61, no.~9, pp. 3863--3877, 2013.

\bibitem{goldenbaum2013harnessing}
M.~Goldenbaum, H.~Boche, and S.~Sta{\'n}czak, ``Harnessing interference for
  analog function computation in wireless sensor networks,'' \emph{IEEE
  Transactions on Signal Processing}, vol.~61, no.~20, pp. 4893--4906, 2013.

\bibitem{chen2018uniform}
L.~Chen, X.~Qin, and G.~Wei, ``A uniform-forcing transceiver design for
  over-the-air function computation,'' \emph{IEEE Wireless Communications
  Letters}, vol.~7, no.~6, pp. 942--945, 2018.

\bibitem{zhu2018mimo}
G.~Zhu and K.~Huang, ``{MIMO} over-the-air computation for high-mobility
  multimodal sensing,'' \emph{IEEE Internet of Things Journal}, vol.~6, no.~4,
  pp. 6089--6103, 2018.

\bibitem{zhai2021hybrid}
X.~Zhai, X.~Chen, J.~Xu, and D.~W.~K. Ng, ``Hybrid beamforming for massive
  {MIMO} over-the-air computation,'' \emph{IEEE Transactions on
  Communications}, vol.~69, no.~4, pp. 2737--2751, 2021.

\bibitem{cao2021optimized}
X.~Cao, G.~Zhu, J.~Xu, Z.~Wang, and S.~Cui, ``Optimized power control design
  for over-the-air federated edge learning,'' \emph{IEEE Journal on Selected
  Areas in Communications}, vol.~40, no.~1, pp. 342--358, 2021.

\bibitem{sery2020analog}
T.~Sery and K.~Cohen, ``On analog gradient descent learning over multiple
  access fading channels,'' \emph{IEEE Transactions on Signal Processing},
  vol.~68, pp. 2897--2911, 2020.

\bibitem{mcmahan2017communication}
B.~McMahan, E.~Moore, D.~Ramage, S.~Hampson, and B.~A. y~Arcas,
  ``Communication-efficient learning of deep networks from decentralized
  data,'' in \emph{Proc. PMLR Artificial Intelligence and Statistics}.\hskip
  1em plus 0.5em minus 0.4em\relax PMLR, 2017, USA, pp. 1273--1282.

\bibitem{tropp2017practical}
J.~A. Tropp, A.~Yurtsever, M.~Udell, and V.~Cevher, ``Practical sketching
  algorithms for low-rank matrix approximation,'' \emph{SIAM Journal on Matrix
  Analysis and Applications}, vol.~38, no.~4, pp. 1454--1485, 2017.

\bibitem{tao1997convex}
P.~D. Tao and L.~T.~H. An, ``Convex analysis approach to {DC} programming:
  theory, algorithms and applications,'' \emph{Acta Mathematica Vietnamica},
  vol.~22, no.~1, pp. 289--355, 1997.

\bibitem{le1997solving}
T.~H. Le~An and P.~Dinh~Tao, ``Solving a class of linearly constrained
  indefinite quadratic problems by {DC} algorithms,'' \emph{Journal of Global
  Optimization}, vol.~11, no.~3, pp. 253--285, 1997.

\bibitem{foucart2013invitation}
S.~Foucart and H.~Rauhut, \emph{A Mathematical Introduction to Compressive
  Sensing}.\hskip 1em plus 0.5em minus 0.4em\relax Springer, 2013.

\bibitem{donoho2006compressed}
D.~L. Donoho, ``Compressed sensing,'' \emph{IEEE Transactions on Information
  Theory}, vol.~52, no.~4, pp. 1289--1306, Apr. 2006.

\bibitem{candes2006robust}
E.~J. Cand{\`e}s, J.~Romberg, and T.~Tao, ``Robust uncertainty principles:
  Exact signal reconstruction from highly incomplete frequency information,''
  \emph{IEEE Transactions on Information Theory}, vol.~52, no.~2, pp. 489--509,
  Jan. 2006.

\bibitem{tropp2007signal}
J.~A. Tropp and A.~C. Gilbert, ``Signal recovery from random measurements via
  orthogonal matching pursuit,'' \emph{IEEE Transactions on Information
  Theory}, vol.~53, no.~12, pp. 4655--4666, Dec. 2007.

\bibitem{rebollo2002optimized}
L.~Rebollo-Neira and D.~Lowe, ``Optimized orthogonal matching pursuit
  approach,'' \emph{IEEE Signal Processing Letters}, vol.~9, no.~4, pp.
  137--140, Apr. 2002.

\bibitem{wang2012generalized}
J.~Wang, S.~Kwon, and B.~Shim, ``Generalized orthogonal matching pursuit,''
  \emph{IEEE Transactions on Signal Processing}, vol.~60, no.~12, pp.
  6202--6216, Sep. 2012.

\bibitem{cai2011orthogonal}
T.~T. Cai and L.~Wang, ``Orthogonal matching pursuit for sparse signal recovery
  with noise,'' \emph{IEEE Transactions on Information theory}, vol.~57, no.~7,
  pp. 4680--4688, Jun. 2011.

\bibitem{needell2009cosamp}
D.~Needell and J.~A. Tropp, ``{CoSaMP}: Iterative signal recovery from
  incomplete and inaccurate samples,'' \emph{Applied and Computational Harmonic
  Analysis}, vol.~26, no.~3, pp. 301--321, 2009.

\bibitem{parlett1982estimating}
B.~N. Parlett, H.~Simon, and L.~Stringer, ``On estimating the largest
  eigenvalue with the {Lanczos} algorithm,'' \emph{Mathematics of Computation},
  vol.~38, no. 157, pp. 153--165, 1982.

\bibitem{SCHWETLICK20031}
\BIBentryALTinterwordspacing
H.~Schwetlick and U.~Schnabel, ``Iterative computation of the smallest singular
  value and the corresponding singular vectors of a matrix,'' \emph{Linear
  Algebra and its Applications}, vol. 371, pp. 1--30, 2003. [Online].
  Available:
  \url{https://www.sciencedirect.com/science/article/pii/S0024379503004907}
\BIBentrySTDinterwordspacing

\bibitem{liang2014computing}
Q.~Liang and Q.~Ye, ``Computing singular values of large matrices with an
  inverse-free preconditioned {Krylov} subspace method,'' \emph{Electronic
  Transactions on Numerical Analysis}, vol.~42, p. 197, 2014.

\bibitem{horn2012matrix}
R.~A. Horn and C.~R. Johnson, \emph{Matrix Analysis}, 2nd~ed.\hskip 1em plus
  0.5em minus 0.4em\relax Cambridge University Press, 2013.

\bibitem{krizhevsky2009learning}
A.~Krizhevsky, G.~Hinton \emph{et~al.}, ``Learning multiple layers of features
  from tiny images,'' 2009, technical report, University of Toronto.

\bibitem{SIG-093}
E.~Björnson, J.~Hoydis, and L.~Sanguinetti, ``Massive {MIMO} networks:
  Spectral, energy, and hardware efficiency,'' \emph{Foundations and Trends®
  in Signal Processing}, vol.~11, no. 3-4, pp. 154--655, 2017.

\bibitem{li2022federated}
Q.~Li, Y.~Diao, Q.~Chen, and B.~He, ``Federated learning on non-iid data silos:
  An experimental study,'' in \emph{Proc. IEEE 38th International Conference on
  Data Engineering (ICDE)}, 2022, pp. 965--978.

\bibitem{simonyan2015very}
K.~Simonyan and A.~Zisserman, ``Very deep convolutional networks for
  large-scale image recognition,'' in \emph{Proc. International Conference on
  Learning Representations (ICLR)}, 2015, pp. 1--14, preprint at
  arXiv:1409.1556.

\bibitem{hwang2013holistic}
I.~Hwang, B.~Song, and S.~S. Soliman, ``A holistic view on hyper-dense
  heterogeneous and small cell networks,'' \emph{IEEE Communications Magazine},
  vol.~51, no.~6, pp. 20--27, 2013.

\bibitem{bhushan2014network}
N.~Bhushan, J.~Li, D.~Malladi, R.~Gilmore, D.~Brenner, A.~Damnjanovic, R.~T.
  Sukhavasi, C.~Patel, and S.~Geirhofer, ``Network densification: the dominant
  theme for wireless evolution into {5G},'' \emph{IEEE Communications
  Magazine}, vol.~52, no.~2, pp. 82--89, 2014.

\bibitem{andrews2016we}
J.~G. Andrews, X.~Zhang, G.~D. Durgin, and A.~K. Gupta, ``Are we approaching
  the fundamental limits of wireless network densification?'' \emph{IEEE
  Communications Magazine}, vol.~54, no.~10, pp. 184--190, 2016.

\bibitem{kamel2016ultra}
M.~Kamel, W.~Hamouda, and A.~Youssef, ``Ultra-dense networks: A survey,''
  \emph{IEEE Communications Surveys \& Tutorials}, vol.~18, no.~4, pp.
  2522--2545, May 2016.

\bibitem{adedoyin2020combination}
M.~A. Adedoyin and O.~E. Falowo, ``Combination of ultra-dense networks and
  other {5G} enabling technologies: A survey,'' \emph{IEEE Access}, vol.~8, pp.
  22\,893--22\,932, 2020.

\bibitem{kim2021energy}
S.~Kim, J.~Son, and B.~Shim, ``Energy-efficient ultra-dense network using
  {LSTM}-based deep neural networks,'' \emph{IEEE Transactions on Wireless
  Communications}, vol.~20, no.~7, pp. 4702--4715, 2021.

\bibitem{wu2019towards}
Q.~Wu and R.~Zhang, ``Towards smart and reconfigurable environment: Intelligent
  reflecting surface aided wireless network,'' \emph{IEEE Communications
  Magazine}, vol.~58, no.~1, pp. 106--112, 2019.

\end{thebibliography}
\bibliographystyle{IEEEtran}
\end{document}